\documentclass[a4paper]{llncs}

\usepackage{pdfsync}
\usepackage{authcite}
\usepackage{proof}
\usepackage{prooftree}
\usepackage{stmaryrd}
\usepackage{amsmath}
\usepackage{amssymb}
\usepackage{latexsym}
\usepackage{xspace}
\usepackage{mymacros}
\usepackage[utf8]{inputenc}
\usepackage{pdfsync}
\usepackage{cancel}
\usepackage{color}
\usepackage{listings}
\usepackage{authcite}
\usepackage{enumitem}
\usepackage{yhmath}

\title{A Single-Assignment Translation for\\ Annotated Programs}

\author{Cláudio Belo Lourenço \and Maria João Frade \and\\ Jorge Sousa Pinto} 

\institute{HASLab/INESC TEC \& Universidade do Minho, Portugal}


\begin{document}
\maketitle 


\begin{abstract}
We present a translation of While programs annotated with loop invariants into
a dynamic single-assignment language with a dedicated iterating construct. We
prove that the translation is sound and complete. This is a companion report
to our paper \emph{Formalizing Single-assignment Program Verification: an
Adaptation-complete Approach}~\cite{PintoJS:sinapv}.
\end{abstract}

\section{Introduction}

Deductive verification tools typically rely on the conversion of
code to a dynamic single-assignment (SA) form.  In~\cite{PintoJS:sinapv} we formalize this approach by proposing a sound and complete technique based on the translation of annotated programs to such an intermediate form,  and the generation of verification conditions from it. 
We introduce a notion of SA iterating program, as well as an  \emph{adaptation-complete} program logic (adequate for adapting specifications to local contexts) and an \emph{efficient verification conditions} generator (in the sense of Flanagan and Saxe). 

In this report we define a translation function that transforms an annotated program into SA from. 
The program produced conforms to the syntactic restrictions of an SA annotated program and preserves the operational semantics of the program in a sense that will be made precise later. Moreover, the translation function is lifted to Hoare triples, and  preserves derivability in a goal-directed system of Hoare logic, i.e, if a Hoare triple for an annotated program is derivable, then the translated triple is also derivable in that system (with derivations guided by the translated loop invariants).

The core result of the paper, which we prove in full detail, is that the defined translation conforms to the requirements, identified in~\cite{PintoJS:sinapv}, for the generation of verification conditions to be sound and complete for the initial program. More precisely, if a Hoare triple containing an annotated While program is translated into SA form by our translation, and verification conditions for this SA triple are then generated, then (i) if these verification conditions are successfully discharged, then the original triple is valid (the program is correct); and (ii) if the original program is correct, and moreover it is correctly annotated (i.e. its annotations allow for the Hoare triple to be derived) then the verification conditions are all valid. 

\medskip

The report is organised as follows. In the remaining of this section
we introduce some preliminary notation. In Section
\ref{sec:HoareLogic} we recall the necessary background material about
Hoare logic. In Section \ref{sec:translation} we define the
translation function $\tsa$, and illustrate its use by means of an
example in Section~\ref{sec:example}. In Section \ref{sec:soundness}
we prove that $\tsa$ is sound and preserves
$\Hg$-derivability. Section \ref{sec:conclusion} concludes the report.

\subsection{Preliminary notation}\label{subsec:preliminaries}

First of all let us introduce the notation used for functions. 
Given a function $f$, $\dom(f)$ denotes the domain of $f$, $\rng(f)$
denotes the codomain of $f$.
As usual, $f[x \mapsto a]$ denotes the function that maps $x$ to $a$ and any other value $y$
to $f(y)$. 

For finite functions we also use the following notation. $[]$
represents the empty function (whose domain is $\emptyset$), and
$[x_1\mapsto a_1,\ldots, x_n\mapsto a_n]$ represents the finite
function whose domain is $\{x_1,\ldots, x_n\}$ that maps each $x_i$ to
$a_i$. We also use the notation $[ x\mapsto g(x) \mid x\in A]$ to
represent the function with domain $A$ generated by $g$.

Given a total function  $f : X\to Y$ and a partial functions $g : X \pmap Y$, we define $f\oplus g : X \to Y$ as follows 
\[
(f\oplus g)(x) = 
\left\{  
  \begin{array}{ll} g(x) & \mbox{\rm ~ if~ $x\in\dom(g)$} \\
    f(x) & \mbox{\rm ~ if~ $x\not\in\dom(g)$} 
  \end{array}
\right.
\]
i.e., $g$ overrides $f$ but leaves $f$ unchanged at points where $g$
is not defined.

\section{Hoare logic}\label{sec:HoareLogic}
We will work with Hoare logic for simple While programs. The logic deals
with the notion of correction w.r.t. a \emph{specification} that
consists of a \emph{precondition} -- an assertion that is assumed to
hold when the execution of the program starts -- and a
\emph{postcondition} -- an assertion that is required to hold when
execution stops.

\subsection{Syntax}\label{subsec:syntax}

We consider a typical While language whose commands $C\in\com$ are
defined over a set of variables $x \in \V$ in the following way:
\begin{align*}
  \com \ni C\;  ::=  \; \skp \,\mid\,  \asgn{x}{e} \; \mid\;  C \sep C \;
               \mid\;  \ifte{b}{C}{C} \;  \mid\;  \while{b}{C} 
\end{align*}
We will not fix the language of program expressions $e \in {\bf Exp}$
and Boolean expressions $b \in {\bf Exp^{\bool}}$, both constructed
over variables from $\V$ (a standard instantiation is for ${\bf Exp}$
to be a language of integer expressions and $\bf Exp^{\bool}$
constructed from comparison operators over ${\bf Exp}$, together with
Boolean operators).
In addition to expressions and commands, we need formulas that express
properties of particular states of the program. Program assertions
$\phi, \theta, \psi \in\assert$ (preconditions and postconditions in
particular) are formulas of a first-order language obtained as an
expansion of $\bf Exp^{\bool}$.

We also require a class of formulas for specifying the behaviour of
programs. Specifications are pairs $(\phi, \psi)$, with
$\phi, \psi \in \assert$ intended as precondition and postcondition
for a program. The precondition is an assertion that is assumed to
hold when the program is executed, whereas the postcondition is
required to hold when its execution stops. A \emph{Hoare triple},
written as $\hoatri{\phi}{C}{\psi}$, expresses the fact that the
program $C$ conforms to the specification $(\phi, \psi)$.

\subsection{Semantics}\label{subsec:semmantics}

We will consider an \emph{interpretation structure} $\M = (D, I)$ for
the vocabulary describing the concrete syntax of program
expressions. This structure provides an interpretation domain $D$ as
well as a concrete interpretation of constants and operators, given by
$I$.
The interpretation of expressions depends on a \emph{state}, which is
a function that maps each variable into its value. We will write
$\states = \V \to D$ for the set of states (note that this approach
extends to a multi-sorted setting by letting $\states$ become a
\emph{generic function space}). For $s\in\states$, $\overrd{s}{x}{a}$
will denote the state that maps $x$ to $a$ and any other variable $y$
to $s(y)$.
The interpretation of $e \in {\bf Exp}$ will be given by a function
$\bsem{e}_\M : \states \to D$, and the interpretation of
$b \in {\bf Exp^{\bool}}$ will be given by
$\bsem{b}_\M : \states \to \{\Lfalse, \Ltrue \}$. This reflects our
assumption that an expression has a value at every state (evaluation
always terminates without error) and that expression evaluation never
changes the state (the language is free of \emph{side effects}).
For the interpretation of assertions we take the usual interpretation
of first-order formulas, noting two facts: since assertions build on
the language of program expressions their interpretation also depends
on $\M$ (possibly extended to account for user-defined predicates and
functions), and states from $\Sigma$ can be used as \emph{variable
  assignments} in the interpretation of assertions. The interpretation
of the assertion $\phi \in \assert$ is then given by
$\bsem{\phi}_\M : \states \to \{\Lfalse, \Ltrue \}$, and we will write
$s \models \phi$ as a shorthand for $\bsem{\phi}_\M(s) = \Ltrue$. In
the rest of the paper we will omit the $\M$ subscripts for the sake of
readability; the interpretation structure will be left implicit.

\begin{figure}[t]
\begin{frameit}
  \small
  \centering
  \begin{enumerate}
  \item $\eval{\skp}{s}{s}$
  \item $\eval{\asgn{x}{e}}{s}{\overrd{s}{x}{\bsem{e}(s)}}$
  \item if $\eval{C_1}{s}{s'}$ and $\eval{C_2}{s'}{s''}$, then
    $\eval{C_1 \sep C_2}{s}{s''}$
  \item if $\bsem{b}(s)= \Ltrue$ and $\eval{C_t}{s}{s'}$, then 
    $\eval{\ifte{b}{C_t}{C_f}}{s}{s'}$
  \item if $\bsem{b}(s)= \Lfalse$ and $\eval{C_f}{s}{s'}$,  then 
    $\eval{\ifte{b}{C_t}{C_f}}{s}{s'}$
  \item if $\bsem{b}(s)= \Ltrue$, $\eval{C}{s}{s'}$ and
    $\eval{\while{b}{C}}{s'}{s''}$, then 
    $\eval{\while{b}{C}}{s}{s''}$
  \item if $\bsem{b}(s)= \Lfalse$, then $\eval{\while{b}{C}}{s}{s}$
  \end{enumerate}
\end{frameit}
  \caption{Evaluation semantics for While programs}
  \label{fig:eval-semantics}
\end{figure}

For commands, we consider a standard operational, natural style
semantics, based on a deterministic \emph{evaluation relation}
$\leadsto\, \subseteq \com \times \states \times \states$ (which again
depends on an implicit interpretation of program expressions). We will
write $\eval{C}{s}{s'}$ to denote the fact that if $C$ is executed in
the initial state $s$, then its execution terminates, and the final
state is $s'$. The usual inductive definition of this relation is
given in Figure~\ref{fig:eval-semantics}.

The intuitive meaning of the triple $\hoatri{\phi}{C}{\psi}$ is that
if the program $C$ is executed in an initial state in which the
precondition $\phi$ is true, then either execution of $C$ does not
terminate or if it does, the postcondition $\psi$ will be true in the
final state. Because termination is not guaranteed, this is
called a \emph{partial correctness} specification. Let us now define
formally the notion of validity for such a triple.

\begin{definition}\label{def:hoaretriple-validity}
  The Hoare triple $\hoatri{\phi}{C}{\psi}$ is said to be
  \emph{valid}, denoted $\models \hoatri{\phi}{C}{\psi}$, whenever for
  all $s,s'\in\states$, if $s \models \phi$ and $\eval{C}{s}{s'}$,
  then $s' \models \psi$.
\end{definition}

\subsection{Hoare Calculus}\label{subsec:HoareCalculus}

Hoare~\cite{HoareCAR:axibcp} introduced an inference system for
reasoning about Hoare triples, which we will call system $\HL$ - see
Figure~\ref{fig:systems-H}. Note that the system contains
one rule (conseq) whose application is guarded by first-order
conditions. We will consider that reasoning in this system takes place
in the context of the \emph{complete theory} $\rm Th(\M)$ of the
implicit structure $\M$, so that when constructing derivations in
$\HL$ one simply checks, when applying the (conseq) rule, whether the
side conditions are elements of $\rm Th(\M)$. We will write
$\infHL \hoatri{\phi}{C}{\psi}$ to denote the fact that the triple is
derivable in this system with $\rm Th(\M)$.

\begin{figure}[t]
\begin{frameit}
  \centering
  \scriptsize
    \begin{minipage}[t]{.98\linewidth}
      $$
      \begin{array}{lc}

        (\mbox{skip})&
        \infer{\hoatri{\phi}{\skp}{\phi}}{}
        \\ \\ \\
        
        (\mbox{assign})&
        \infer{\hoatri{\psi[\subst{x}{e}]}{\asgn{x}{e}}{\psi}}{}
        \\ \\ \\

        (\mbox{seq})&
        \infer{\hoatri{\phi}{C_1\sep C_2}{\psi}}{
        \hoatri{\phi}{C_1}{\theta} 
        & \quad
          \hoatri{\theta}{C_2}{\psi} 
          }   
        \\ \\ \\
        
        (\mbox{if})&
        \infer{\hoatri{\phi}{\ifte{b}{C_t}{C_f}}{\psi}}{
        \hoatri{\phi \andd \boolemb{b}}{C_t}{\psi}
        & \quad
          \hoatri{\phi \andd \negg \boolemb{b}}{C_f}{\psi}
          } 
        \\ \\ \\

        (\mbox{while})&
        \infer{\hoatri{\theta}{\while{b}{C}}{\theta \andd \negg b}}{
        \hoatri{\theta \andd \boolemb{b}}{C}{\theta}
        }
        \\ \\ \\

        (\mbox{conseq})&
        \infer[\mbox{if } \ \begin{array}{ll} 
                                  \phi' \impl \phi\ \mbox{ and } \\ 
                                  \psi \impl \psi' 
                                \end{array}]
        {\hoatri{\phi'}{C}{\psi'}}{
        \hoatri{\phi}{C}{\psi}
        }
        \\ 
      \end{array}
      $$
    \end{minipage}
\end{frameit}
  \caption{\label{fig:systems-H} System $\HL$}
\end{figure}

System $\HL$
is sound w.r.t. the semantics of Hoare triples; it is also complete as
long as the assertion language is sufficiently expressive (a result
due to Cook~\cite{CookSA:soucaspv}).
One way to ensure this is to force the existence of a \emph{strongest
  postcondition} for every command and assertion. Let $C \in \com$ and
$\phi \in \assert$, and denote by ${\rm post}(\phi, C)$ the set of
states
$\{s' \in \states \mid \eval{C}{s}{s'} \mbox{ for some } s \in \states
\mbox{ such that } \bsem{\phi}(s) = \Ltrue \}$.
In what follows we will assume that the assertion language $\assert$
is \emph{expressive} with respect to the command language $\com$ and
interpretation structure $\M$, i.e., for every $\phi \in \assert$ and
$C \in \com$ there exists $\psi \in \assert$ such that
$s \models \psi$ iff $s \in {\rm post}(\phi, C)$ for any
$s \in \states$. The reader is directed to~\cite{AptKR:tenyhlsone} for
 details.


\begin{proposition}[Soundness of system $\HL$]\label{prop:soundnessH}
  Let $C \in \com$ and $\phi, \psi \in \assert$. If $\infHL \hoatri{\phi}{C}{\psi}$, then
     $\models \hoatri{\phi}{C}{\psi}$.
\end{proposition}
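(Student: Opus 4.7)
The plan is to prove soundness by structural induction on the derivation $\infHL \hoatri{\phi}{C}{\psi}$, with a case analysis on the last rule applied. Fix arbitrary states $s, s' \in \states$ with $s \models \phi$ and $\eval{C}{s}{s'}$; in each case we must show $s' \models \psi$. Most cases follow immediately from the corresponding clause of the evaluation semantics in Figure~\ref{fig:eval-semantics}, combined with the induction hypothesis applied to the premises of the rule.

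For the base cases, (skip) is immediate because $s' = s$, and (assign) follows from a standard substitution lemma: if $s \models \psi[\subst{x}{e}]$, then $\overrd{s}{x}{\bsem{e}(s)} \models \psi$, and by clause~(2) this is exactly the final state. For (seq), the evaluation $\eval{C_1\sep C_2}{s}{s'}$ must come from clause~(3), yielding an intermediate state $s''$; two applications of the IH (first to the derivation of $\hoatri{\phi}{C_1}{\theta}$, then to that of $\hoatri{\theta}{C_2}{\psi}$) give the result. The (if) case splits on the value of $\bsem{b}(s)$ and uses the appropriate premise; note we need $s \models \boolemb{b}$ iff $\bsem{b}(s) = \Ltrue$, which is precisely the purpose of the embedding $\boolemb{\cdot}$. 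The (conseq) case is immediate from the fact that the side conditions $\phi' \impl \phi$ and $\psi \impl \psi'$ are in $\rm Th(\M)$, hence valid in every state.

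The main obstacle is the (while) case, because a single structural induction on the Hoare derivation is not enough: clause~(6) of the evaluation semantics is itself recursive, and each iteration must be matched against the single premise $\hoatri{\theta \andd \boolemb{b}}{C}{\theta}$. The standard fix is a nested induction on the derivation of $\eval{\while{b}{C}}{s}{s'}$ (equivalently, on the number of loop iterations). In the base subcase clause~(7) applies, $\bsem{b}(s) = \Lfalse$ and $s' = s$, so $s \models \theta \andd \negg b$ follows from $s \models \theta$. In the inductive subcase clause~(6) applies, producing an intermediate $s''$ with $\eval{C}{s}{s''}$ and $\eval{\while{b}{C}}{s''}{s'}$; applying the outer IH to the body's derivation gives $s'' \models \theta$, and the inner IH applied to the shorter while-evaluation concludes $s' \models \theta \andd \negg b$.

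Finally, note that expressiveness of $\assert$ plays no role in soundness itself; it is only needed for the converse (completeness). The proof therefore goes through uniformly for any interpretation structure $\M$, as long as the side conditions in (conseq) are correctly validated against $\rm Th(\M)$.
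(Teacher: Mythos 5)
Your proof is correct: it is the standard soundness argument by induction on the $\HL$-derivation, correctly handling the only delicate case (while) via a nested induction on the evaluation derivation, and correctly observing that expressiveness is irrelevant here. The paper itself states Proposition~\ref{prop:soundnessH} without proof, treating it as classical background (citing Hoare and Cook), so there is no authorial proof to compare against; your argument is the expected one and fills that gap adequately.
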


\begin{proposition}[Completeness of system $\HL$]\label{prop:completenessH}
    Let $C \in \com$ and $\phi, \psi \in \assert$. With $\assert$
    expressive in the above sense, if 
     $\models \hoatri{\phi}{C}{\psi}$, then
     $\infHL \hoatri{\phi}{C}{\psi}$. 
\end{proposition}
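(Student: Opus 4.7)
The plan is to establish the following stronger inductive claim on the structure of $C$: for every $\phi\in\assert$, the strongest postcondition $\mathit{SP}_{\phi,C}\in\assert$ (whose existence is guaranteed by expressiveness, satisfying $s\models\mathit{SP}_{\phi,C}$ iff $s\in\mathrm{post}(\phi,C)$) is derivable as a postcondition for $C$ from $\phi$, i.e.\ $\infHL\hoatri{\phi}{C}{\mathit{SP}_{\phi,C}}$. Once this claim is available, suppose $\models\hoatri{\phi}{C}{\psi}$. Then for every $s'$ with $s'\models\mathit{SP}_{\phi,C}$ there is some $s\models\phi$ with $\eval{C}{s}{s'}$, and the validity assumption forces $s'\models\psi$; hence $\mathit{SP}_{\phi,C}\impl\psi$ is valid and thus belongs to $\mathrm{Th}(\M)$. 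A single application of (conseq) to the derivation of $\hoatri{\phi}{C}{\mathit{SP}_{\phi,C}}$ then yields $\infHL\hoatri{\phi}{C}{\psi}$.

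The cases $\skp$, $\asgn{x}{e}$, $C_1\sep C_2$ and $\ifte{b}{C_t}{C_f}$ are routine. For $\skp$ one has $\mathit{SP}_{\phi,\skp}\equiv\phi$ and the axiom (skip) suffices. For the assignment, the usual $\exists v.\,\phi[\subst{x}{v}]\andd x = e[\subst{x}{v}]$ is expressible (the assertion language is first-order over ${\bf Exp}$), and (assign) followed by (conseq) closes the case. For $C_1\sep C_2$ one exploits $\mathit{SP}_{\phi,C_1\sep C_2}\equiv\mathit{SP}_{\mathit{SP}_{\phi,C_1},C_2}$ and combines the two inductive hypotheses via (seq). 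For $\ifte{b}{C_t}{C_f}$ one applies the induction hypothesis to $C_t$ and $C_f$ under preconditions $\phi\andd\boolemb{b}$ and $\phi\andd\negg\boolemb{b}$ respectively, uses (conseq) to weaken both branch postconditions to the disjunction of the two branch strongest postconditions (which is semantically equivalent to, hence $\mathrm{Th}(\M)$-interchangeable with, $\mathit{SP}_{\phi,\ifte{b}{C_t}{C_f}}$), and then applies (if).

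The decisive case, and the expected source of difficulty, is $\while{b}{C}$. The plan is to take as loop invariant the assertion $I$ that is true at exactly those states reachable from some state satisfying $\phi$ by finitely many complete iterations of the loop body, each preceded by $\boolemb{b}$ holding. Making $I$ available as a formula of $\assert$ is precisely what the expressiveness hypothesis is designed for: one either codes reachability via existential quantification over iteration count and intermediate states, or obtains $I$ as the strongest postcondition of $\phi$ with respect to an auxiliary iterated composition. With $I$ in hand, three first-order side conditions must be discharged inside $\mathrm{Th}(\M)$: $\phi\impl I$ (zero iterations); $\mathit{SP}_{I\andd\boolemb{b},C}\impl I$ (one more body execution still witnesses reachability), which together with the inductive hypothesis for $C$ and (conseq) yields $\infHL\hoatri{I\andd\boolemb{b}}{C}{I}$; and $I\andd\negg\boolemb{b}\impl\mathit{SP}_{\phi,\while{b}{C}}$ (a state witnessing reachability and falsifying $b$ is exactly a terminal state of the loop). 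An application of (while) delivers $\hoatri{I}{\while{b}{C}}{I\andd\negg\boolemb{b}}$, and a final (conseq) closes the induction. The non-routine content concentrates entirely in the expressibility of $I$; everything else is bookkeeping with (conseq).
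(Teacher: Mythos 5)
The paper does not actually prove this proposition: it is stated as Cook's classical result, with the reader directed to the cited survey for details. Your proposal reconstructs the standard Cook argument (derive $\hoatri{\phi}{C}{\mathrm{sp}(\phi,C)}$ by induction, then one application of (conseq)), and the cases for $\skp$, assignment, sequence and conditional are fine as sketched. The architecture is exactly the textbook one, so there is nothing to compare against in the paper itself; the question is whether your argument is self-contained, and it is not quite, precisely at the point you yourself flag as decisive.

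The gap is the expressibility of the loop invariant $I$ (the set of states reachable at the loop head after finitely many guarded iterations of the body). The paper's expressiveness hypothesis only guarantees that $\mathrm{post}(\phi,C)$ is definable for $\phi\in\assert$ and $C\in\com$, and $I$ is not of that form: it is the union over all $n$ of the strongest postconditions of the $n$-fold guarded body, equivalently the strongest postcondition of a Kleene-starred command that does not belong to $\com$. Neither of your two justifications closes this. Coding reachability ``via existential quantification over iteration count and intermediate states'' presupposes a structure $\M$ in which finite computation sequences can be G\"odel-coded (true of standard arithmetic, but the proposition is stated for an arbitrary expressive $\M$), and the ``auxiliary iterated composition'' is not a command the hypothesis applies to. The classical repairs are either to use $\mathrm{wlp}(\while{b}{C},\psi)$ (or $\mathrm{wlp}(\while{b}{C},\mathrm{sp}(\phi,\while{b}{C}))$) as the invariant, for which the three side conditions you list go through verbatim but which requires expressiveness with respect to weakest liberal preconditions, or to invoke the (nontrivial) theorem that $\mathrm{sp}$- and $\mathrm{wlp}$-expressiveness coincide for deterministic while-programs. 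Since that is exactly the content hidden behind the paper's citation, your proof should either adopt the $\mathrm{wlp}$ form of expressiveness explicitly or cite the equivalence; as written, the while case rests on an expressibility claim that the stated hypothesis does not deliver.
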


Let $\FV{\phi}$ denote the set of free variables occurring in $\phi$.
The sets of \emph{variables occurring} and \emph{assigned} in the
program $C$ will be given by $\vars{C}$ and $\assd{C}$ according to
the next definition.

\begin{definition}\label{def:varsasgn}
Let $\vars{e}$ be the set of  variables occurring in expression $e$,
and $C\in\com$. 
\begin{itemize}
  \item The set of  \emph{variables occurring} in $C$, $\vars{C}$, is defined
as follows:
  { 
    \small
      \begin{align*}
        \vars{\skp}  & =  \emptyset \\
        \vars{\asgn{x}{e}} & =  \{x\} \cup \vars{e}\\
        \vars{C_1 \sep C_2} & =  \vars{C_1} \cup \vars{C_2} \\
        \vars{\ifte{b}{C_t}{C_f}}  & =  \vars{b} \cup \vars{C_t} \cup \vars{C_f} \\
        \vars{\while{b}{C}} & = \vars{b} \cup \vars{C}
      \end{align*}
  }
  \item The set of  \emph{variables assigned} in $C$, $\assd{C}$, is defined
as follows:
  {
    \small
    \begin{align*}
        \assd{\skp}  & =  \emptyset \\
        \assd{\asgn{x}{e}} & =  \{x\} \\
        \assd{C_1 \sep C_2} & =  \assd{C_1} \cup \assd{C_2} \\
        \assd{\ifte{b}{C_t}{C_f}} & = \assd{C_t} \cup \assd{C_f} \\
        \assd{\while{b}{C}} & =  \assd{C}
      \end{align*}
  }

  \item We will write
  $\SA{\phi}{C}$ to denote the fact that $C$ does not assign the
  variables occurring free in $\phi$, i.e.
  $\assd{C}\, \cap\, \FV{\phi} = \emptyset$.
\end{itemize}
\end{definition}

Triples in which the program do not assign variables from the precondition enjoy
the following property in $\HL$:

\begin{lemma}\label{lem:constancy}
  Let $\phi, \psi\in\assert$ and $C\in\com$, such that
  $\SA{\phi}{C}$. If $\infHL \hoatri{\phi}{C}{\psi}$, then
  $\infHL \hoatri{\phi}{C}{\phi \andd \psi}$.
\end{lemma}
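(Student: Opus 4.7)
My plan is to reduce this syntactic derivability statement to a semantic one by invoking the soundness and completeness of $\HL$ (Propositions~\ref{prop:soundnessH} and~\ref{prop:completenessH}), the latter being available because $\assert$ is assumed expressive with respect to $\com$ and $\M$.

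First I would establish a \emph{semantic constancy} sublemma: if $\SA{\phi}{C}$ and $\eval{C}{s}{s'}$, then $s \models \phi$ iff $s' \models \phi$. This goes by a routine induction on the derivation of $\eval{C}{s}{s'}$ using Figure~\ref{fig:eval-semantics}. The only clause that does any work is assignment: when $C=\asgn{x}{e}$ we have $\assd{C}=\{x\}$, so by hypothesis $x\notin\FV{\phi}$, and thus $s$ and $\overrd{s}{x}{\bsem{e}(s)}$ agree on every free variable of $\phi$, giving $\phi$ the same truth value at both states. Skip is immediate; for sequencing, conditionals, and while loops, the assigned-variable condition transfers to each subprogram that may actually be executed, so the induction hypothesis applies to each step of the evaluation and the biconditional is chained through.

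Next, by Proposition~\ref{prop:soundnessH} applied to the hypothesis $\infHL \hoatri{\phi}{C}{\psi}$, I obtain $\models \hoatri{\phi}{C}{\psi}$. Fix any $s,s'\in\states$ with $s\models\phi$ and $\eval{C}{s}{s'}$: then $s'\models\psi$ by validity of the triple, and $s'\models\phi$ by the sublemma, hence $s'\models\phi\andd\psi$. This shows $\models \hoatri{\phi}{C}{\phi\andd\psi}$, and Proposition~\ref{prop:completenessH} then returns the required $\infHL \hoatri{\phi}{C}{\phi\andd\psi}$.

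There is no real obstacle: the sublemma is elementary and the completeness detour does all the bookkeeping for free. A fully syntactic alternative would proceed by induction on $C$ (or on the derivation, threading (conseq) applications through), rebuilding the derivation case by case; for instance, in the assignment case one uses $(\phi\andd\psi)[\subst{x}{e}] = \phi\andd\psi[\subst{x}{e}]$ (because $x\notin\FV{\phi}$) and adjusts with (conseq) using the valid implication $\phi\impl\psi[\subst{x}{e}]$ extracted from the hypothesis. This is strictly more laborious and gains nothing over the semantic route, so I would present the semantic argument.
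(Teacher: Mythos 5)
Your proof is correct, but it follows a genuinely different route from the paper's. The paper stays entirely inside the proof system: it first shows by structural induction on $C$ that $\infHL \hoatri{\phi}{C}{\phi}$ (the only interesting case being assignment, where $\phi[\subst{x}{e}]=\phi$ because $x\notin\FV{\phi}$), then combines this with the hypothesis $\infHL \hoatri{\phi}{C}{\psi}$ via the admissible \emph{conjunction rule} to get $\infHL \hoatri{\phi\andd\phi}{C}{\phi\andd\psi}$, and finishes with (conseq). You instead prove a semantic constancy sublemma and shuttle between $\models$ and $\infHL$ using soundness and completeness. Both arguments go through, but they buy different things. The paper's argument is purely proof-theoretic: it is independent of Cook's completeness theorem and hence of the expressiveness assumption, at the cost of invoking the conjunction rule as an admissible rule (itself an induction over derivations, delegated to the Reynolds reference). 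Your argument has an elementary sublemma and no derivation surgery, but it makes the lemma depend on expressiveness --- without it, completeness fails and the detour through semantics cannot be closed back into a derivation. Since the paper assumes expressiveness globally from Section~\ref{subsec:HoareCalculus} onward, your proof is legitimate in context; just note that you are using a much heavier hammer than the statement needs, and that in developments where a constancy- or conjunction-style lemma is itself an ingredient of the completeness proof, the semantic route would be circular. Your sketched ``fully syntactic alternative'' is essentially the paper's proof, except that the paper factors it through $\infHL \hoatri{\phi}{C}{\phi}$ plus the conjunction rule rather than re-threading (conseq) through the whole derivation of $\hoatri{\phi}{C}{\psi}$.
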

\begin{proof}
  By induction on the structure of $C$ one proves that
  $\infHL \hoatri{\phi}{C}{\phi}$.  Then it follows from
  $\infHL \hoatri{\phi}{C}{\psi}$, by the conjunction assertions rule
  (see for instance~\cite{ReynoldsJC:thepl}), that
  $\infHL \hoatri{\phi \land \phi}{C}{\phi \land \psi}$. We conclude
  by applying (conseq).  \hfill$\Box$
\end{proof}

\begin{figure}[t]
  \centering
  \footnotesize
  \begin{tabular}{|c|}
    \hline 
    \begin{minipage}[t]{1\linewidth}
      $$
      \begin{array}{ll}
         (\mbox{skip}) & \infer{\hoatri{\phi}{\skp}{\phi}}{}
        \\ \\ 
        
         (\mbox{assign}) \ \ \ & \infer{\hoatri{\psi[\subst{x}{e}]}{\asgn{x}{e}}{\psi}}{}
        \\ \\ 

         (\mbox{seq}) & \infer{\hoatri{\phi}{C_1\sep C_2}{\psi}}{
        \hoatri{\phi}{C_1}{\theta} 
        & \quad
          \hoatri{\theta}{C_2}{\psi} 
          }   
        \\ \\ 
        
         (\mbox{if}) & \infer{\hoatri{\phi}{\ifte{b}{C_t}{C_f}}{\psi}}{
        \hoatri{\phi \andd \boolemb{b}}{C_t}{\psi}
        & \quad
          \hoatri{\phi \andd \negg \boolemb{b}}{C_f}{\psi}
          } 
        \\ \\ 

        (\mbox{while}) &  \infer{\hoatri{\theta}{\while{b}{C}}{\theta \andd \negg b}}{
        \hoatri{\theta \andd \boolemb{b}}{C}{\theta}
        }
        \\ \\ 

         (\mbox{conseq}) & \infer[\mbox{if } \ \begin{array}{ll} 
                                  \phi' \impl \phi\ \mbox{ and } \\ 
                                  \psi \impl \psi' 
                                \end{array}]
        {\hoatri{\phi'}{C}{\psi'}}{
        \hoatri{\phi}{C}{\psi}
        }
        \\ 
      \end{array}
      $$
    \end{minipage}
    \\ \\
    \hline
  \end{tabular}
  \caption{\label{fig:systems-H} Systems $\HL$}
\end{figure}

\subsection{Goal-directed logic}\label{subsec:goal-directedlogic}

We introduce a syntactic class $\Acom$ of \emph{annotated programs},
which differs from $\com$ only in the case of while commands, which
are of the form $\whileinv{b}{\theta}{C}$ where the assertion $\theta$
is an annotated loop invariant. Annotations do not affect the
operational semantics.  Note that for $C\in\Acom$, $\vars{C}$ includes
the free variables of the annotations in $C$. In what follows we will
use the auxiliary function $\Aerase{\cdot}$ that
erases all annotations from a program defined in the following definition.

\begin{definition}
  The function $\Aerase{\cdot} : \Acom \to \com$ is defined as follows:
  \begin{align*}
    \Aerase{\skp} & = \skp\\
    \Aerase{\asgn{x}{e}} & = \asgn{x}{e}\\
    \Aerase{C_1 \sep C_2} & = \Aerase{C_1}\sep \Aerase{C_2}\\
    \Aerase{\ifte{b}{C_t}{C_f}} & = \ifte{b}{\Aerase{C_t}}{\Aerase{C_f}}\\
    \Aerase{\whileinv{b}{\theta}{C}} & = \while{b}{\Aerase{C}}\\
  \end{align*}
\end{definition}

In Figure \ref{fig:systems-Hg} we present system $\Hg$, a
\emph{goal-directed} version of Hoare logic for triples containing
annotated programs. This system is intended for mechanical
construction of derivations: loop invariants are not invented but
taken from the annotations, and there is no ambiguity in the choice of
rule to apply, since a consequence rule is not present. The different
derivations of the same triple in $\Hg$ differ only in the
intermediate assertions used.

\begin{figure}[t]
  \centering
  \footnotesize
  \begin{tabular}{|c|}
    \hline 
    \begin{minipage}[t]{1\linewidth}
      $$
      \begin{array}{ll}
        (\mbox{skip}) &  \infer[\mbox{if } \phi \impl \psi]
        {\hoatri{\phi}{\skp}{\psi}}
        {}
        \\ \\   

        (\mbox{assign}) \ \ \ &  \infer[\mbox{if $\phi \impl \psi[\subst{x}{e}]$}]
        {\hoatri{\phi}{\asgn{x}{e}}{\psi}}
        {}
        \\ \\ 

        (\mbox{seq}) &  \infer{\hoatri{\phi}{C_1\sep C_2}{\psi}}{
        \hoatri{\phi}{C_1}{\theta} 
        & \quad
        \hoatri{\theta}{C_2}{\psi} 
        }   
        \\ \\ 
        
         (\mbox{if}) & \infer{\hoatri{\phi}{\ifte{b}{C_t}{C_f}}{\psi}}{
        \hoatri{\phi \andd \boolemb{b}}{C_t}{\psi}
        & \quad
        \hoatri{\phi \andd \negg \boolemb{b}}{C_f}{\psi}
        } 
        \\ \\ 

        (\mbox{while}) &  \infer[\mbox{if } \ \begin{array}{ll} 
                                  \phi \impl \theta\ \mbox{ and } \\ 
                                  \theta \andd \negg b \impl \psi 
                                \end{array}]
        {\hoatri{\phi}{\whileinv{b}{\theta}{C}}{\psi}}{
        \hoatri{\theta \andd \boolemb{b}}{C}{\theta}
        }
        \\ 
      \end{array}
      $$
    \end{minipage}
    \\ \\
    \hline
  \end{tabular}
  \caption{\label{fig:systems-Hg} System $\Hg$ }
\end{figure}

The following can be proved by induction on the
derivation of $\infHg \hoatri{\phi}{C}{\psi}$.

\begin{proposition}[Soundness of $\Hg$]\label{prop:Hg-sound-loops}
  Let $C \in \Acom$ and $\phi, \psi \in \assert$. 
  $
  \mbox{ If }  \infHg \hoatri{\phi}{C}{\psi}  \mbox{ then
  }  \infHL \hoatri{\phi}{\Aerase{C}}{\psi}
  $.
\end{proposition}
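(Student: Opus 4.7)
The plan is to proceed exactly as the hint suggests, by induction on the derivation of $\infHg \hoatri{\phi}{C}{\psi}$, performing a case analysis on the last rule applied. The key observation driving the whole argument is that $\Hg$ has absorbed the consequence rule into the side conditions of its \emph{skip}, \emph{assign} and \emph{while} rules, so to simulate a $\Hg$ derivation inside $\HL$ we will systematically apply the matching $\HL$ rule and then discharge the side conditions via (conseq).

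For the base cases, consider (skip): the $\Hg$ rule gives $\hoatri{\phi}{\skp}{\psi}$ with side condition $\phi \impl \psi$; in $\HL$ we derive $\hoatri{\phi}{\skp}{\phi}$ by the $\HL$ (skip) rule and then apply (conseq) using the trivial implication $\phi \impl \phi$ and the side condition $\phi \impl \psi$. The (assign) case is handled analogously: start from the $\HL$ axiom $\hoatri{\psi[\subst{x}{e}]}{\asgn{x}{e}}{\psi}$ and close with (conseq), using the $\Hg$ side condition $\phi \impl \psi[\subst{x}{e}]$ on the precondition side and the trivial implication on the postcondition side. In both cases $\Aerase{\cdot}$ acts as the identity, so the conclusion matches.

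For the inductive cases (seq) and (if), we simply apply the induction hypothesis to each premise, obtaining the corresponding $\HL$ judgements for the erased subprograms, and then apply the matching $\HL$ rule; since $\Aerase{\cdot}$ commutes with sequential composition and with the conditional, the erased program has exactly the required shape. The (while) case is the only one where a little more bookkeeping is needed: from the premise $\infHg \hoatri{\theta \andd \boolemb{b}}{C}{\theta}$ we get, by IH, $\infHL \hoatri{\theta \andd \boolemb{b}}{\Aerase{C}}{\theta}$; the $\HL$ (while) rule then yields $\infHL \hoatri{\theta}{\while{b}{\Aerase{C}}}{\theta \andd \negg b}$, and a final application of (conseq) using the two $\Hg$ side conditions $\phi \impl \theta$ and $\theta \andd \negg b \impl \psi$ gives the result. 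Because $\Aerase{\whileinv{b}{\theta}{C}} = \while{b}{\Aerase{C}}$, the program in the conclusion is as required.

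There is no real obstacle here; the only point demanding a moment's attention is remembering to invoke (conseq) in $\HL$ precisely where $\Hg$ had baked the implication into a side condition, and to check that $\Aerase{\cdot}$ commutes appropriately with each constructor so the target of the induction matches the shape of the $\HL$ rule being applied.
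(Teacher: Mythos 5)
Your proof is correct and follows exactly the route the paper indicates (the paper only states that the result "can be proved by induction on the derivation" and omits the details): induction on the $\Hg$-derivation, replaying each rule in $\HL$ and using (conseq) to discharge the side conditions that $\Hg$ bakes into its (skip), (assign) and (while) rules. All cases, including the commutation of $\Aerase{\cdot}$ with the constructors, are handled as the paper intends.
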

The converse implication does not hold, since the annotated invariants
may be inadequate for deriving the triple. Instead we need the
following definition:
\begin{definition}\label{def:correctly-annot}
  Let $C \in \Acom$ and $\phi, \psi \in \assert$. We say that $C$ is 
  \emph{correctly-annotated} w.r.t. $(\phi, \psi)$ if $\infHL
  \hoatri{\phi}{\Aerase{C}}{\psi}$ implies $\infHg
  \hoatri{\phi}{C}{\psi}$.
\end{definition}

The following lemma states the admissibility of the consequence rule
in $\Hg$.

\begin{lemma}\label{lemma:Hg-conseq}
  Let $C \in \Acom$ and $\phi, \psi, \phi', \psi' \in \assert$ such
  that $\infHg \hoatri{\phi}{C}{\psi}$, $\models \phi' \impl \phi$,
  and $\models \psi \impl \psi'$. Then $\infHg
  \hoatri{\phi'}{C}{\psi'}$.
\end{lemma}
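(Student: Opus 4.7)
The plan is to proceed by induction on the derivation of $\infHg \hoatri{\phi}{C}{\psi}$ (equivalently, on the structure of $C$, since in $\Hg$ the rule to apply is determined by the syntactic form of $C$). The key observation is that although $\Hg$ lacks an explicit consequence rule, every leaf rule (skip, assign, while) carries an implication as a side condition, and the structural rules (seq, if) thread pre- and postconditions transparently — so strengthening the precondition and weakening the postcondition can always be absorbed into the existing side conditions by transitivity of implication, or pushed down into the subderivations by the induction hypothesis.

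For the base cases: if the derivation ends in (skip), the side condition gives $\models \phi \impl \psi$, so $\models \phi' \impl \phi \impl \psi \impl \psi'$, and a single application of (skip) yields $\infHg \hoatri{\phi'}{\skp}{\psi'}$. For (assign), the side condition is $\models \phi \impl \psi[\subst{x}{e}]$; since $\models \psi \impl \psi'$ entails $\models \psi[\subst{x}{e}] \impl \psi'[\subst{x}{e}]$, composition with $\models \phi' \impl \phi$ gives $\models \phi' \impl \psi'[\subst{x}{e}]$, so (assign) applies.

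For the inductive cases: if the derivation ends in (seq) with intermediate assertion $\theta$, I apply the induction hypothesis to the left premise using $\phi' \impl \phi$ and $\theta \impl \theta$, and to the right premise using $\theta \impl \theta$ and $\psi \impl \psi'$, then recombine with (seq). For (if), from $\phi' \impl \phi$ I get $\phi' \andd \boolemb{b} \impl \phi \andd \boolemb{b}$ and $\phi' \andd \negg \boolemb{b} \impl \phi \andd \negg \boolemb{b}$; the induction hypothesis then yields derivations of $\hoatri{\phi' \andd \boolemb{b}}{C_t}{\psi'}$ and $\hoatri{\phi' \andd \negg \boolemb{b}}{C_f}{\psi'}$, recombined by (if). For (while), the original derivation used side conditions $\models \phi \impl \theta$ and $\models \theta \andd \negg b \impl \psi$ together with a subderivation of $\hoatri{\theta \andd \boolemb{b}}{C}{\theta}$; I reuse that same subderivation unchanged, and check the side conditions for $\hoatri{\phi'}{\whileinv{b}{\theta}{C}}{\psi'}$ as $\models \phi' \impl \phi \impl \theta$ and $\models \theta \andd \negg b \impl \psi \impl \psi'$.

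There is no real obstacle here — the proof is essentially bookkeeping — but the one point worth being explicit about is that in the (while) case the loop invariant $\theta$ stays the same (it is fixed by the annotation), so strengthening the precondition and weakening the postcondition only affects the two side conditions at the top of the (while) rule and never requires modifying the body's subderivation.
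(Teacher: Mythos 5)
Your proof is correct. Note that the paper states Lemma~\ref{lemma:Hg-conseq} without any proof, so there is nothing to compare against; your argument --- induction on the $\Hg$-derivation, absorbing the strengthening of the precondition and weakening of the postcondition into the side conditions of the leaf rules and pushing them through the premises of (seq) and (if), while reusing the (while) body subderivation unchanged --- is the standard one the authors evidently intended. You also correctly identify the only step that needs a word of justification, namely that validity of $\psi \impl \psi'$ is preserved under the substitution $[\subst{x}{e}]$ in the (assign) case.
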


It is possible to write an algorithm,
known as a \emph{verification conditions generator} (VCGen), that
simply collects the side conditions of a derivation without actually
constructing it. $\Hg$ is agnostic with respect to a strategy for
propagating assertions, but the VCGen necessarily imposes one such
strategy~\cite{GordonMJC:forh}.

\subsection{Single-assignment programs}\label{subsec:saprograms}

Translation of code into Single-Assignment (SA) form has been part of
the standard compilation pipeline for decades now; in such a program
each variable is assigned at most once. The fragment
$\asgn{x}{10}\sep \asgn{x}{x+10}$ could be translated as
$\asgn{x_1}{10}\sep \asgn{x_2}{x_1+10}$, using a different ``version
of $x$'' variable for each assignment. In this paper we will use a
dynamic notion of single-assignment (DSA) program, in which each
variable may occur syntactically as the left-hand side of more than
one assignment instruction, as long as it is not assigned more than
once \emph{in each execution}. For instance the fragment
$\ifte{x>0}{\asgn{x}{x+10}}{\skp}$ could be translated into DSA form
as $\ifte{x_0>0}{\asgn{x_1}{x_0+10}}{\asgn{x_1}{x_0}}$. Note that the
\emph{else} branch cannot be simply $\skp$, since it is necessary to
have a single version variable (in this case $x_1$) representing $x$
when exiting the conditional.

In the context of the guarded commands language, it has been shown
that verification conditions for \emph{passive programs} (essentially
DSA programs without loops) can be generated avoiding the exponential
explosion problem.
However, a proper single-assignment imperative language in which
programs with loops can be expressed for the purpose of verification
does not exist. In what follows we will introduce precisely such a
language based on DSA form.

\begin{definition} The set $\rnm \subseteq \com$ of \emph{renamings} consists
  of all programs of the form
  $\{\asgn{x_1}{y_1}\sep \ldots \sep \asgn{x_n}{y_n}\}$ such that all
  $x_i$ and $y_i$ are distinct.
  \label{def:rnm}
\end{definition}

A renaming $\R = \{\asgn{x_1}{y_1}\sep \ldots \sep \asgn{x_n}{y_n}\}$
represents the finite bijection
$[x_1 \mapsto y_1,\ldots,x_n \mapsto y_n]$, which we will also denote
by $\R$. 
%
  %
Furthermore, 
$\R(\phi)$ will denote the assertion that results from applying the
substitution $[\subst{x_1}{y_1},\ldots,\subst{x_n}{y_n}]$ to $\phi$.
Also, for $s\in\states$ we define the state $\R(s)$ as follows: $\R(s)(x) =
s(\R(x))$ if $x\in\dom(\R)$, and $\R(s)(x) = s(x)$ otherwise.


\begin{lemma}\label{lemma:rnm}
Let $\R \in\rnm$, $\phi, \psi\in\assert$ and $s\in\states$.
  \begin{enumerate}
  \item $\eval{\R}{s}{\R(s)}$
  \item $\bsem{\R(\phi)}(s) = \bsem{\phi}(\R(s))$
  \item $\models\hoatri{\phi}{\R}{\psi}$ ~ iff ~  $\models \phi \impl \R(\psi)$.
  \end{enumerate}
\end{lemma}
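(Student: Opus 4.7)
The plan is to tackle the three parts in order, since part~3 follows almost immediately from the conjunction of parts~1 and~2. The main structural work lies in parts~1 and~2, and the essential use of the distinctness condition in Definition~\ref{def:rnm} is confined to part~1. Part~2 is essentially the standard substitution lemma from first-order logic, and I expect it to carry most of the technical weight.

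For part~1, I would induct on $n$, the number of assignments in $\R$, unfolding the evaluation rules of Figure~\ref{fig:eval-semantics}. Starting from $s$, executing $\asgn{x_i}{y_i}$ transforms the current state by mapping $x_i$ to the value of $y_i$ in that state. The crucial observation is that since all the $x_j$'s and $y_j$'s are pairwise distinct, no $y_i$ is ever on the left-hand side of any assignment in $\R$, so at the moment $\asgn{x_i}{y_i}$ fires, the value of $y_i$ is still $s(y_i)$. The final state therefore maps each $x_i$ to $s(y_i)$ and leaves every other variable equal to $s$, which matches the definition of $\R(s)$ exactly.

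For part~2, I would invoke the standard substitution lemma for the assertion language: simultaneous syntactic substitution in a formula commutes with the corresponding change in the variable assignment used to interpret it. Since $\assert$ is obtained as an expansion of $\bf Exp^{\bool}$, I would first establish the analogous statement for program expressions by induction on their structure (reducing to the definition of $\R(s)$ in the variable case), then lift to assertions by induction on $\phi$. The parallel substitution $[\subst{x_1}{y_1},\ldots,\subst{x_n}{y_n}]$ is well-formed because the $x_i$'s are distinct, and the usual capture-avoiding conventions handle the quantifier cases.

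For part~3, I simply combine the first two parts. For the forward direction, given $\models\hoatri{\phi}{\R}{\psi}$ and a state $s$ with $s\models\phi$, part~1 provides the execution $\eval{\R}{s}{\R(s)}$, validity of the triple gives $\R(s)\models\psi$, and part~2 rewrites this as $s\models\R(\psi)$. For the converse, any pair $(s,s')$ with $s\models\phi$ and $\eval{\R}{s}{s'}$ forces $s'=\R(s)$ by determinism of the operational semantics together with part~1, and then part~2 translates $s\models\R(\psi)$ into $s'\models\psi$. The main obstacle, if any, is the bookkeeping in part~1 certifying that no earlier assignment clobbers a variable needed on a later right-hand side; this is exactly what the distinctness hypothesis of Definition~\ref{def:rnm} buys us.
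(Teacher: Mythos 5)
Your proposal is correct and follows essentially the same route as the paper, which proves part~1 by inspection of the evaluation relation, part~2 by structural induction on assertions, and part~3 by combining the first two; your write-up simply fills in the details (the role of distinctness in part~1, determinism in part~3) that the paper leaves implicit.
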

\begin{proof}
  1. By inspection on the evaluation relation. 2. By structural
  induction on the interpretation assertions. 3. Follows directly from
  1 and 2.
\hfill$\Box$
\end{proof}

\begin{definition}\label{def:SA-for-program}
Let $\AFORSAcom$ be the class of \emph{annotated single-assignment
  programs}. Its abstract syntax is defined by 
$$
C\ ::=  \ \skp \mid C \sep C \mid \asgn{x}{e} \mid  \ifte{b}{C}{C} \mid \forinv{\Ic}{b}{\Uc}{\theta}{C}
$$
\vspace{-8mm}
  \begin{itemize}\small
  \item[] \hspace{-7mm} with the following restrictions:
  \item $\skp \in \AFORSAcom$
  \item $\asgn{x}{e} \in \AFORSAcom$ if  $x\not\in\vars{e}$
  \item $C_1 \sep C_2 \in \AFORSAcom$ if $C_1, C_2 \in \AFORSAcom$ and
    $\vars{C_1}\cap \assd{C_2} = \emptyset$
  \item $\ifte{b}{C_t}{C_f} \in \AFORSAcom$ if $C_t, C_f \in
    \AFORSAcom$ and $\vars{b}\cap (\assd{C_t} \cup \assd{C_f}) = \emptyset$
  \item $\forinv{\Ic}{b}{\Uc}{\theta}{C} \in \AFORSAcom$ if $C \in
    \AFORSAcom$, $\Ic, \Uc \in \rnm$, $\assd{\Ic}=\assd{\Uc}$,
    $\rng(\Uc) \subseteq \assd{C}$, and $(\vars{\Ic}\cup \vars{b}\cup \FV{\theta})\cap\assd{C}
    =\emptyset$
  \end{itemize}
\noindent where the sets of variables occurring and assigned in the
program are extended with the case of for-commands as follows:
$$
{\small
  \begin{array}{lcl}
\vars{\forinv{\Ic}{b}{\Uc}{\theta}{C}}\;  &=& \vars{\Ic} \cup \vars{b}
  \cup \FV{\theta} \cup \vars{C}
  \\ 
 \assd{\forinv{\Ic}{b}{\Uc}{\theta}{C}}  &=& \assd{\Ic} \cup \assd{C}
  \end{array}
}
$$
\end{definition}  
Note that the definition of $\SA{\phi}{C}$ is naturally lifted to
annotated programs.

The definition is straightforward except in the case of loops. In a
strict sense it is not possible to write iterating programs in DSA
form. So what we propose here is a syntactically controlled violation
of the single-assignment constraints that allows for structured
reasoning.  Loop bodies are SA blocks, but loops contain a renaming
$\Uc$ (to be executed after the body) that is free from
single-assignment restrictions. The idea is that the ``initial
version'' variables of the loop body (the ones used in the loop
condition) are updated by the $\Uc$ renaming, which transports values
from one iteration to the next, and is allowed to assign variables
already assigned in $\Ic$ or occurring in $C$ or $b$. The
initialization code $\Ic$ on the other hand contains a renaming
assignment that runs exactly once, even if no iterations take
place. This ensures that the initial version variables always contain
the output values of the loop: $\Uc$ is always executed after every
iteration, and in the case of zero iterations they
have been initialized by $\Ic$.

Consider the factorial program shown below on the left.  The counter
$i$ ranges from $1$ to $n$ and the accumulator $f$ contains at each
step the factorial of $i-1$. The program is annotated with an
appropriate loop invariant; it is easy to show that the program is
correct w.r.t. the specification
$(n \geq 0 \andd n=n_{\it aux}, f = n_{\it aux}!)$.

\vspace{-1em}
\begin{tabular}{ll}
\begin{minipage}[t]{.5\textwidth} 
\footnotesize
$$
\begin{array}{l}
    \asgn{f}{1} \sep \\
     \asgn{i}{1} \sep \\
     \whileinv{i \leq n}{f = (i-1)! \andd i \leq n+1}{\\
       \block{\\
         \quad \asgn{f}{f * i} \sep \\
         \quad \asgn{i}{i + 1}  \\
       }} 
\end{array}
$$
\end{minipage}
&\qquad
\begin{minipage}[t]{.3\textwidth} 
\footnotesize
$$
\begin{array}{l}
      \asgn{f_1}{1} \sep \asgn{i_1}{1} \sep \\
      \Ic\\
     \while{(i_{a0} \leq n)}{\\
       \block{\\
         \quad \asgn{f_{a1}}{f_{a0} * i_{a0}}\sep \asgn{i_{a1}}{i_{a0} + 1}\sep  \\
         \quad \Uc\\
       }}
\end{array}
$$
\end{minipage}
\end{tabular}\\
On the right we show the same code with the blocks converted to SA
form. The variables in the loop are indexed with an 
`$a$', and then sequentially indexed with integer numbers as assignments
take place (any  fresh variable names would do). The initial
version variables of the loop body $f_{a0}$ and $i_{a0}$ are the ones used
in the Boolean expression, which is evaluated at the beginning of each
iteration. We have placed in the code the required renamings $\Ic$ and
$\Uc$, and it should be easy to instantiate them.  $\Ic$ should be
defined as $\asgn{i_{a0}}{i_1} \sep \asgn{f_{a0}}{f_1}$, and $\Uc$ as
$\asgn{i_{a0}}{i_{a1}}\sep \asgn{f_{a0}}{f_{a1}}$. Note that without $\Uc$ the new
values of the counter and of the accumulator would not be transported
to the next iteration. The initial version variables $i_{a0}$ and $f_{a0}$
are the ones to be used after the loop to access the value of the
counter and accumulator. A specification for this
program could be written  as
$(n \geq 0 \andd n=n_{\it aux}, f_{a0} = n_{\it aux}!)$.

It is straightforward to convert this pseudo-SA code to a program that
is in accordance with Definition~\ref{def:SA-for-program}, with a
\emph{for} command encapsulating the structure of the SA loop. The
required invariant annotation uses the initial version variables:
$$
{\scriptsize
\begin{array}{l}
    \asgn{f_1}{1} \sep 
         \asgn{i_1}{1} \sep \\
    \forinv{\block{\asgn{i_{a0}}{i_1} \sep \asgn{f_{a0}}{f_1}}}{i_{a0} \leq
    n}{\block{\asgn{i_{a0}}{i_{a1}} \sep \asgn{f_{a0}}{f_{a1}}}}{f_{a0} = (i_{a0}-1)! \andd i_{a0} \leq n+1}{\\
      \block{\\
        \quad \asgn{f_{a1}}{f_{a0} * i_{a0}} \sep 
                   \asgn{i_{a1}}{i_{a0} + 1}  \\
      }}\\
\end{array}
}
$$

\begin{definition}
 The function $\Tinv : \AFORSAcom \to \Acom$ translates SA programs to
(annotated) While programs as follows:
$$
\begin{array}{rcl}
  \Tinv(\skp) & = & \skp \\
  \Tinv(\asgn{x}{e}) & = & \asgn{x}{e} \\
  \Tinv(C_1 \sep C_2) & = & \Tinv(C_1) \sep \Tinv(C_2) \\
  \Tinv(\ifte{b}{C_t}{C_f}) & = & \ifte{b}{\Tinv(C_t)}{\Tinv(C_f)} \\
  \Tinv(\forinv{\Ic}{b}{\Uc}{\theta}{C}) & = & \Ic \sep \whileinv{b}{\theta}{\{\Tinv(C) \sep \Uc \}} \\
\end{array}
$$ 
\end{definition}

A translation of annotated programs into SA form (as ilustrated by the
factorial example) must of course
abide by the syntactic restrictions of $\AFORSAcom$, with additional
requirements of a semantic nature. In particular, the translation will
annotate the SA program with loop invariants (produced from those
contained in the original program), and $\Hg$-derivability guided by
these annotations must be preserved. On the other hand, the
translation must be sound: it will not translate invalid triples into
valid triples. Both these notions are expressed by translating back to
While programs.

\begin{definition}[SA translation]\label{def:Htriples-transl-loops}\
  A function
  $\T : \assert \times \Acom \times \assert \to \assert \times
  \AFORSAcom \times \assert$
  is said to be a \emph{single-assignment translation} if when
  $\T(\phi, C, \psi) = (\phi', C', \psi')$ we have $\SA{\phi'}{C'}$,
  and the following both hold:
  \begin{enumerate}
  \item If  $\: \models \hoatri{\phi'}{\Aerase{\Tinv(C')}}{\psi'}$, 
    then  $\: \models \hoatri{\phi}{\Aerase{C}}{\psi}$.
  \item If  $\: \infHg \hoatri{\phi}{C}{\psi}$, then
 $\: \infHg \hoatri{\phi'}{\Tinv(C')}{\psi'}$.
  \end{enumerate}
\end{definition}

The remmaning of this report is devoted to the definition of a
concrete SA translation function and to proving that the function
defined conforms Definition~\ref{def:Htriples-transl-loops}.

\section{A translation to single-assigment form}\label{sec:translation}

In this section we define a translation function that transforms an
annotated program into SA form.  We start by introducing some
auxiliary definitions to deal with variable versions.
Without loss of generality, we will assume that the universe of variables of
the SA programs consists of two parts: the \emph{variable identifier} and a
\emph{version}. A version is a non-empty list of positive numbers.  We let
$\Vsa = \V \times \NAT^+$  be the set of SA variables, and we will write $x_l$
to denote $(x,l)\in\Vsa$. We write $\SAstates = \Vsa \to D$ for the set of
states, with $D$ being the interpretation domain.

Consider the \emph{version function}  $\Vcal : \V \to \NAT^+$. The function
$\tosa{\Vcal} : \V\to\Vsa$ is such that  $\tosa{\Vcal}(x)= x_{\Vcal(x)}$.
$\tosa{\Vcal}$ is lifted to ${\bf Exp}$ and $\assert$ in the obvious way,
renaming the variables according to $\Vcal$.
Let $s\in\states$ and $\Vcal : \V \to \NAT^+$. We define $\Vcal(s) \in
\Vsa\pmap D$ as the partial function $[  \tosa{\Vcal}(x) \mapsto s(x) ~|~ x
\in \V ]$.





\begin{figure}[]
\scriptsize
\begin{frameit}
  \begin{align*}
    \tsa&: (\Var \to \mathbb{N}^+) \times \Acom \to (\Var \to \mathbb{N}^+) \times \AFORSAcom\\
    \tsa(\Vcal,\skp) & = (\Vcal,\skp) \\
    \tsa(\Vcal, \asgn{x}{e}) & = (\Vcal[x \mapsto \nextt(\Vcal(x))], 
                      \asgn{x_{\nextt(\Vcal(x))}}{\tosa{\Vcal}(e))} \\
    \tsa(\Vcal, C_1;C_2) & = (\Vcal'',C_{1}'; C_{2}') \\
      & \arraycolsep=0.5pt\def\arraystretch{1.5}
        \begin{array}{lll}
          \mbox{ where } & (\Vcal',C_{1}') &= \tsa(\Vcal, C_1) \\
                         & (\Vcal'',C_{2}') &= \tsa(\Vcal', C_2) 
        \end{array}\\
    \tsa(\Vcal, \ifte{b}{C_t}{C_f}) & = (\supp(\Vcal',\Vcal''),
                        \ifte{\tosa{\Vcal}(b)}{C_t'; \mergee(\Vcal',\Vcal'')}{C_f'; \mergee(\Vcal'',\Vcal')})   \\
      & \arraycolsep=0.5pt\def\arraystretch{1.5}
        \begin{array}{lll}
          \mbox{ where } & (\Vcal',C_t') &= \tsa(\Vcal,C_t) \\
                         & (\Vcal'',C_f') &= \tsa(\Vcal,C_f)
        \end{array}\\
    \tsa(\Vcal, \whileinv{b}{\theta}{C}) 
        & = (\Vcal''', \forinv{\Ic}{\tosa{\Vcal'}(b)}{\Uc}{\tosa{\Vcal'}(\theta)}{\{C'\}}; \upd(\dom(\Uc))) \\
        & \arraycolsep=0.5pt\def\arraystretch{1.5}
          \begin{array}{lll}
            \mbox{ where } & \Ic &= [\asgn{x_{\new(\Vcal(x))}}{x_{\Vcal(x)}} \mid x \in \assd{C}] \\
                 & \Vcal' &= \Vcal[x \mapsto \new(\Vcal(x)) \mid x \in \assd{C}] \\ 
                 & (\Vcal'',C') &= \tsa(\Vcal',C) \\
                 & \Uc &= [\asgn{x_{\new(\Vcal(x))}}{x_{\Vcal''(x)}} \mid x \in \assd{C}] \\
                 & \Vcal''' &= \Vcal''[x \mapsto \jump(l) \mid x_l \in \dom(\Uc)]
          \end{array}
  \end{align*}
\hrulefill \\
  \begin{minipage}[t]{.40\linewidth}
    \begin{align*}
      &\nextt : \mathbb{N}^+ \to \mathbb{N}^+\\
      &\nextt\ (h:t) = (h + 1):t\\
      \\
      &\new : \mathbb{N}^+ \to \mathbb{N}^+\\
      &\new\ l = 1:l\\
      \\
      &\jump : \mathbb{N}^+ \to \mathbb{N}^+ \\
      &\jump\ (i:j:t) = (j+1):t \\
      \\
      &  (h:t) \prec (h':t') = h < h' \\
    \end{align*}
  \end{minipage}
  \begin{minipage}[t]{.45\linewidth}
    \begin{align*}
      &\supp : (\Var \to \mathbb{N}^+)^2 \to (\Var \to \mathbb{N}^+)\\
      &\supp\ (\Vcal,\Vcal')(x) =  \left\{ 
        \begin{array}{l l}
          \Vcal(x) & \caseif\ \Vcal'(x) \prec \Vcal(x) \\
          \Vcal'(x) & \otherwise
        \end{array} \right.\\
      \\
      &\mergee : (\Var \to \mathbb{N}^+)^2 \to \rnm \\
      &\mergee\ (\Vcal,\Vcal') = [\asgn{x_{\Vcal'(x)}}{x_{\Vcal(x)}} \mid
                                          x \in \V \wedge \Vcal(x) \prec \Vcal'(x)]\\
      \\
      &\upd : \mathcal{P}(\Varsa) \to \rnm \\
      &\upd\ (X) = [\asgn{x_{\jump(l)}}{x_l} \mid x_l \in X ] \\
    \end{align*}
  \end{minipage}

\hrulefill \\
  \begin{align*}
 &   \tsa : \assert \times \Acom \times \assert \to \assertsa \times
   \AFORSAcom \times \assertsa \\ 
  &  \tsa(\phi,C,\psi)  = (\tosa{\Vcal}(\phi),C',\tosa{\Vcal'}(\psi)) \\
   & \hspace{1.5cm}  \mbox{where} \ (\Vcal',C') = \tsa(\Vcal,C)  \mbox{ ,  for
     some } \Vcal \in \Var \to \mathbb{N}^+
  \end{align*}

\end{frameit}
 \caption{\label{fig:sa-translation} SA translation function}
\end{figure}

The translation function $\tsa$ is presented in
Figure~\ref{fig:sa-translation} (top).  The function $\tsa$ receives
the initial version of the variable identifier and the annotated
program, and returns a pair with the final version of each variable
identifier and the SA translated program.
The definition of $\tsa$ relies on various auxiliary functions that
deal with the version list and version functions, and also generate
renaming commands.  The functions are defined using Haskell-like
syntax. We give a brief description of each one: 
\begin{itemize}

\item $\nextt$ increments the first element of a variable version. 


\item $\new$ appends a new element at the head of a variable
  version. 


\item $\jump$ is used to merge the first two elements of a variable
  version. It is used when exiting a loop, in order to return to the
  previous context.

\item $\supp$ receives two functions $\Vcal$ and $\Vcal'$, and returns
  a new one that returns the highest version for each variable
  (depending on whether it is in $\Vcal$ or in $\Vcal'$). A variable
  version $x:xs$ is higher than $y:ys$ if $x > y$.

\item $\mergee$ receives two functions $\Vcal$ and $\Vcal'$ and
  returns a $\rnm$ (see Definition~\ref{def:rnm}) containing
  assignments of the form $\asgn{\tosa{\Vcal'}(x)}{\tosa{\Vcal}(x)}$,
  for each $x \in \dom(\Vcal)$ such that $\Vcal(x) \prec \Vcal'(x)$.

\item $\upd$ fetches the appropriate version of a variable to be used
  after a loop.
\end{itemize}

For the sake of simplicity, we assume that the renaming sequences
$\Ic$ and $\Uc$, defined in the case of while commands, follow some
predefined order established over $\V$ (any order will do).

\section{Example}
\label{sec:example}

We will now illustrate the use of the SA translation by showing the
result of applying it to the following Hoare triple containing the
factorial program (let us call it $\mathsf{FACT}$). To illustrate how nested loops are handled, only
sum arithmetic instructions are used in the program, and
multiplication is implemented by a loop.

$$
\begin{array}{l}
\hoatrinl{n \geq 0 \andd aux=n}{
  \asgn{f}{1} \sep \\
  \asgn{i}{1} \sep \\
  \whileinv{i \leq n}{f = (i-1)! \andd i \leq n+1}{\\
    \block{\\
      \quad \asgn{j}{1} \sep \\
      \quad \asgn{r}{0} \sep \\
      \quad \whileinv{j \leq i}{j \leq i + 1 \wedge r = f * (j-1)}{\\
        \quad \block{\\
          \qquad \asgn{r}{r + f} \sep \\
          \qquad \asgn{j}{j + 1}  \\
          \quad }} \sep \\
      \quad \asgn{f}{r} \sep \\
      \quad \asgn{i}{i + 1}  \\
    }}}
{f = aux!}
\end{array}
$$

Below we show the result of applying the program-level $\tsa$ function
to the above program, taking as initial version function
$\Vcal$ that maps every variable to the list containing the sole
element 0 (note that any version function could be used).  For the sake 
of presentation, index lists will be depicted
using `.'  as a separator and omitting the empty list constructor. The
translated function, which we will call $\mathsf{FACT^{sa}}$ is as follows:

$$
\begin{array}{l}
  \asgn{f_1}{1} \sep \\
  \asgn{i_1}{1} \sep \\
  \forinvT{\block{\asgn{j_{1.0}}{j_0} \sep \asgn{r_{1.0}}{r_0} \sep \asgn{f_{1.1}}{f_1} \sep \asgn{i_{1.1}}{i_1}}}
         {i_{1.1} \leq n_0}
         {\block{\asgn{j_{1.0}}{j_{3.0}} \sep \asgn{r_{1.0}}{r_{3.0}} \sep \asgn{f_{1.1}}{f_{2.1}} \sep \asgn{i_{1.1}}{i_{2.1}}}}
         {f_{1.1} = (i_{1.1}-1)! \andd i_{1.1} \leq n_0+1}{
    \block{\\
      \quad \asgn{j_{2.0}}{1} \sep \\
      \quad \asgn{r_{2.0}}{0} \sep \\
      \quad \forinvTT{\block{\asgn{r_{1.2.0}}{r_{2.0}} \sep \asgn{j_{1.2.0}}{j_{2.0}} }}
                   {j_{1.2.0} \leq i_{1.1}}
                   { \block{\asgn{r_{1.2.0}}{r_{2.2.0}} \sep \asgn{j_{1.2.0}}{j_{2.2.0}}}}
                   { j_{1.2.0} \leq i_{1.1} + 1 \wedge r_{1.2.0} = f_{1.1} * (j_{1.2.0}-1)}{\\
                     \quad \block{\\
                       \qquad \asgn{r_{2.2.0}}{r_{1.2.0} + f_{1.1}} \sep \\
                       \qquad \asgn{j_{2.2.0}}{j_{1.2.0} + 1}  \\
                       \quad }} \sep \\
      \quad \asgn{r_{3.0}}{r_{1.2.0}} \sep \\ 
      \quad \asgn{j_{3.0}}{j_{1.2.0}} \sep \\
      \quad \asgn{f_{2.1}}{r_{3.0}} \sep \\
      \quad \asgn{i_{2.1}}{i_{1.1} + 1}  \\
    }}\sep\\
    \asgn{j_1}{j_{1.0}} \sep \\
    \asgn{r_1}{r_{1.0}} \sep \\
    \asgn{f_2}{f_{1.1}} \sep \\ 
    \asgn{i_2}{i_{1.1}}
\end{array}
$$

In addition to the SA program, $\tsa$ returns a version function
$\Vcal'$ which is $\Vcal[f \mapsto 2, i \mapsto 2, j \mapsto 1, r \mapsto 1]$.  
The initial and final versions functions $\Vcal$ and
$\Vcal'$ will be applied to the precondition and postcondition to
obtain the following triple.

$$
\begin{array}{l}
\hoatri{n_0 \geq 0 \andd aux_0=n_0}{\mathsf{FACT^{sa}}}
{f_2 = aux_0!}
\end{array}
$$

As expected, the program does not assign free variables from the
preconditions, that is, $\SA{\block{n_0 \geq 0 \andd aux_0=n_0}}{\mathsf{FACT^{sa}}}$.

\section{Proving $\tsa$ is an SA translation}\label{sec:soundness}

We will now show that $\tsa$ is indeed an SA translation.

Firstly, we prove that the $\tsa$ translation preserves the
operational semantics of the original programs in the following sense:
if the translated program executes in a state where the values of the
input version of the variables coincide with the values of the
original variables in the initial state, then in the final state the
values of output versions of the variables are also equal to the
values of the original variables in the final state.

Secondly, we prove that lifting the translation function to Hoare
triples results in a sound translation, i.e., if the translated triple
is valid then the original triple must also be valid.

Finally, we will show that $\Hg$-derivability is preserved, i.e. if a
Hoare triple for an annotated program is derivable in $\Hg$, then the
translated triple is also derivable in $\Hg$.






We first consider some lemmas.

\begin{lemma}\label{lem:lemma1}
Let $\Vcal \in \Var \to \mathbb{N}^+$, $s\in\states$ and $s'\in\SAstates$.
  If  $\forall x\in\Var.\, s(x)=s'(\tosa{\Vcal}(x))$, then $s' = s_0'
  \oplus\Vcal(s)$ for some $s_0'\in\SAstates$.
\end{lemma}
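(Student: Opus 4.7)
The plan is very direct: I would take $s_0' = s'$ itself and verify that the equality $s' = s' \oplus \Vcal(s)$ holds pointwise. Since $s_0'$ appears only at points outside $\dom(\Vcal(s))$ in the definition of $\oplus$, its behaviour on $\dom(\Vcal(s))$ is irrelevant, so choosing $s_0' = s'$ is safe.

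First I would unfold definitions. By the definition preceding the lemma, $\Vcal(s)$ is the partial function
\[
\Vcal(s) \;=\; [\,\tosa{\Vcal}(x) \mapsto s(x) \,\mid\, x \in \V\,],
\]
so $\dom(\Vcal(s)) = \{\,\tosa{\Vcal}(x) \mid x \in \V\,\}$ and $\Vcal(s)(\tosa{\Vcal}(x)) = s(x)$ for every $x \in \V$. Then by the definition of $\oplus$, for any $y \in \Vsa$ we have
\[
(s' \oplus \Vcal(s))(y) \;=\;
\begin{cases}
\Vcal(s)(y) & \text{if } y \in \dom(\Vcal(s)), \\
s'(y) & \text{otherwise.}
\end{cases}
\]

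Next I would verify the two cases. If $y \notin \dom(\Vcal(s))$, then $(s' \oplus \Vcal(s))(y) = s'(y)$ trivially. If $y \in \dom(\Vcal(s))$, write $y = \tosa{\Vcal}(x)$ for some $x \in \V$; then $(s' \oplus \Vcal(s))(y) = \Vcal(s)(\tosa{\Vcal}(x)) = s(x)$, which by the hypothesis $s(x) = s'(\tosa{\Vcal}(x))$ equals $s'(y)$. Hence $s' = s' \oplus \Vcal(s)$, and taking $s_0' = s'$ concludes the proof.

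There is no real obstacle here; the lemma is essentially a bookkeeping statement asserting that $s'$ agrees with $\Vcal(s)$ on the latter's domain, so $\Vcal(s)$ can be ``grafted onto'' any total extension (in particular $s'$ itself). The only subtlety worth flagging explicitly in the write-up is that $\tosa{\Vcal}$ is an injection from $\V$ into $\Vsa$ (because $\tosa{\Vcal}(x) = (x, \Vcal(x))$ has $x$ as its first component), which is why $\Vcal(s)$ is well-defined as a finite-style partial function and why the rewriting $y = \tosa{\Vcal}(x)$ above is unambiguous.
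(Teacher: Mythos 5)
Your proof is correct and matches the paper's (the paper simply states that the lemma ``follows directly from the definitions,'' and your argument is exactly the unfolding of those definitions). The witness $s_0' = s'$ together with the pointwise case split on membership in $\dom(\Vcal(s))$ is the intended argument, and your remark on the injectivity of $\tosa{\Vcal}$ is a sensible detail to flag.
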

\begin{proof}
 Follows directly from the definitions.
\hfill $\Box$
\end{proof}

\begin{lemma}\label{lem:lemma2}
  Let $e \in {\bf Exp}$, $\phi\in\assert$,  $\Vcal \in \Var \to
  \mathbb{N}^+$, $s\in\states$ and $s'\in\SAstates$.
  \begin{enumerate}
  \item $\bsem{\tosa{\Vcal}(e)}(s'\oplus\Vcal(s)) = \bsem{e}(s)$
  \item $\bsem{\tosa{\Vcal}(\phi)}(s'\oplus\Vcal(s)) = \bsem{\phi}(s)$
  \end{enumerate}
\end{lemma}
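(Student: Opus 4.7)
The plan is to prove the two parts in order, with part~1 serving as a base for part~2, and to rely crucially on a small pointwise observation about how $s'\oplus\Vcal(s)$ looks on renamed variables.

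The first step is an auxiliary observation: for every $x\in\V$,
\[
(s'\oplus\Vcal(s))(\tosa{\Vcal}(x)) \;=\; s(x).
\]
This is immediate from unfolding the definitions: $\tosa{\Vcal}(x)=x_{\Vcal(x)}$ lies in $\dom(\Vcal(s))$ (since $\Vcal(s)=[\tosa{\Vcal}(y)\mapsto s(y)\mid y\in\V]$), so by definition of $\oplus$ the value is $\Vcal(s)(x_{\Vcal(x)}) = s(x)$. Thus $s'$ is irrelevant at every point produced by $\tosa{\Vcal}$, which is precisely what makes the lemma work regardless of the choice of $s'$.

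For part~1, I would then proceed by structural induction on $e\in{\bf Exp}$. In the variable case $e=x$, the step above gives the result directly. In the constant case the two interpretations are trivially equal. For a compound expression $e=\mathit{op}(e_1,\ldots,e_k)$, the lifting of $\tosa{\Vcal}$ to expressions commutes with the operator, so
\[
\bsem{\tosa{\Vcal}(e)}(s'\oplus\Vcal(s)) \;=\; I(\mathit{op})\bigl(\bsem{\tosa{\Vcal}(e_1)}(s'\oplus\Vcal(s)),\ldots\bigr),
\]
and the induction hypothesis rewrites each argument to $\bsem{e_i}(s)$, whence the whole expression equals $\bsem{e}(s)$.

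For part~2, I would proceed by structural induction on $\phi\in\assert$. Atomic assertions are built from program expressions, so the atomic case reduces to part~1 applied componentwise. Boolean connectives (negation, conjunction, disjunction, implication) follow by unfolding the semantic clause and applying the induction hypothesis to the immediate subformulas. For the quantifier cases $\forall y.\phi'$ and $\exists y.\phi'$, using the standard convention that $\tosa{\Vcal}$ renames only (free) program variables and the bound logical variable $y$ is left untouched, one evaluates the quantifier by ranging $y$ over $D$ and verifies that varying the assignment of $y$ commutes with $\oplus$ (since $y$ is fresh with respect to $\dom(\Vcal(s))$). A direct application of the induction hypothesis to $\phi'$, with the state $s$ and the SA state $s'\oplus\Vcal(s)$ both updated at $y$, closes the case. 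I would expect the quantifier case to be the only delicate point; with the natural variable conventions in place it goes through without difficulty, and the rest of the proof is a routine structural induction.
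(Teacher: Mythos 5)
Your proof is correct and takes essentially the same route as the paper, whose entire argument is that both parts ``follow directly from Lemma~\ref{lem:lemma1}'': the pointwise observation you isolate, namely $(s'\oplus\Vcal(s))(\tosa{\Vcal}(x)) = s(x)$, is exactly the content behind that citation. Your structural inductions on $e$ and $\phi$ (including the care taken with bound variables in the quantifier case) merely spell out the routine details that the paper leaves implicit.
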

\begin{proof}
Both proofs follow directly from Lemma~\ref{lem:lemma1}.
\hfill $\Box$
\end{proof}

\begin{lemma}\label{lem:lemma4}
Let $C \in \Acom$ and $\Vcal \in \Var \to \mathbb{N}^+$. 
If $\tsa(\Vcal, C) = (\Vcal',C')$, then for every $x\not\in\assd{C}$, $\Vcal(x)=\Vcal'(x)$.
\end{lemma}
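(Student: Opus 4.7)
My plan is to proceed by structural induction on $C \in \Acom$, appealing directly to the definition of $\tsa$ in Figure~\ref{fig:sa-translation}. The cases $\skp$ and $\asgn{x}{e}$ are immediate: for $\skp$, $\Vcal' = \Vcal$; for $\asgn{x}{e}$, the only variable whose version changes is $x$ itself, and $\assd{\asgn{x}{e}} = \{x\}$, so for any $y \neq x$ we have $\Vcal'(y) = \Vcal(y)$.

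For sequential composition $C_1 \sep C_2$, let $(\Vcal_1, C_1') = \tsa(\Vcal, C_1)$ and $(\Vcal', C_2') = \tsa(\Vcal_1, C_2)$. Fix $x \notin \assd{C_1 \sep C_2} = \assd{C_1} \cup \assd{C_2}$. Two applications of the induction hypothesis give $\Vcal_1(x) = \Vcal(x)$ and $\Vcal'(x) = \Vcal_1(x)$, so $\Vcal'(x) = \Vcal(x)$. For the conditional $\ifte{b}{C_t}{C_f}$, let $(\Vcal_t, C_t') = \tsa(\Vcal, C_t)$ and $(\Vcal_f, C_f') = \tsa(\Vcal, C_f)$. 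If $x \notin \assd{C_t} \cup \assd{C_f}$, the induction hypothesis yields $\Vcal_t(x) = \Vcal_f(x) = \Vcal(x)$; then from the definition of $\supp$, since $\Vcal_f(x) \prec \Vcal_t(x)$ fails (the two are equal and $\prec$ is strict), we get $\supp(\Vcal_t, \Vcal_f)(x) = \Vcal_f(x) = \Vcal(x)$.

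The only case with any moving parts is the loop $\whileinv{b}{\theta}{C_0}$, where $\tsa$ builds intermediate version functions $\Vcal_1$, $\Vcal_2$, $\Vcal_3$ and I must track each one. Fix $x \notin \assd{\whileinv{b}{\theta}{C_0}} = \assd{C_0}$. Then $\Vcal_1 = \Vcal[y \mapsto \new(\Vcal(y)) \mid y \in \assd{C_0}]$ agrees with $\Vcal$ at $x$; by the induction hypothesis applied to $\tsa(\Vcal_1, C_0) = (\Vcal_2, C_0')$, $\Vcal_2(x) = \Vcal_1(x) = \Vcal(x)$; finally $\dom(\Uc) = \{y_{\new(\Vcal(y))} \mid y \in \assd{C_0}\}$ contains no pair with first component $x$, so the update that produces $\Vcal_3 = \Vcal_2[y \mapsto \jump(l) \mid y_l \in \dom(\Uc)]$ leaves $x$ untouched, giving $\Vcal_3(x) = \Vcal_2(x) = \Vcal(x)$.

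There is no real obstacle: each case reduces to unfolding the definition and checking that none of the pointwise updates applied to $\Vcal$ during translation touches a variable outside $\assd{C}$. The only mildly delicate point is making sure, in the conditional case, that $\prec$ is interpreted strictly so that $\supp$ yields $\Vcal(x)$ regardless of which branch is selected, and, in the loop case, that the reindexing by $\jump$ is indexed by $\dom(\Uc)$, which only contains versions of variables in $\assd{C_0}$.
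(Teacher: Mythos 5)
Your proof is correct and follows exactly the route the paper intends: the paper's proof is stated only as ``by induction on the structure of $C$,'' and your case analysis (including the observation that $\supp$ returns $\Vcal(x)$ when both branches agree, and that the $\jump$ reindexing in the loop case is restricted to $\dom(\Uc)$, hence to versions of variables in $\assd{C_0}$) is precisely the elaboration of that induction.
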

\begin{proof}
  By induction on the structure of $C$. 
\hfill $\Box$
\end{proof}

The following lemma plays a central role in the proof of
Proposition~\ref{prop:sa_translation_final_state}, for the case of a while command.

\begin{lemma}\label{lem:while} 
  Let $C_t \in \Acom$, $\Vcal \in \Var \to \mathbb{N}^+$, $s_i,s_f \in \states$, $s',s_f' \in
\SAstates$ and 
\begin{align*}
&  \Vcal' = \Vcal[x \mapsto \new(\Vcal(x))  \mid x \in \assd{C_t}]  \\
&\tsa(\Vcal', C_t) = (\Vcal'',C_t') \\
& \U = [\asgn{x_{\new(\Vcal(x))}}{x_{\Vcal''(x)}} \mid x \in \assd{C_t}]
\end{align*}
If $\eval{\while{b}{\Aerase{C_t}}}{s_i}{s_f} $
and
$\eval{\while{\Vcal'(b)}{\block{\Aerase{\Tinv(C_t')};\U};\upd(\dom(\U))}}{\\
  s'
  \oplus \Vcal(s_i)}{s_f'}$ 
then $\forall x \in \Var.\, s_f(x) = s_f'(\tosa{\Vcal'}(x))$.
\end{lemma}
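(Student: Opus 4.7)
The plan is to first separate the trailing $\upd(\dom(\U))$ from the while loop, reducing the goal to a statement about the state reached just before $\upd$ runs, and then to induct on the derivation of the source while, invoking Proposition~\ref{prop:sa_translation_final_state} on $C_t$ (available as an outer induction hypothesis, since $C_t$ is a strict subterm of the enclosing while) inside each iteration.

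\emph{Separation.} The second premise decomposes (via rule 3 of the evaluation semantics) into an execution of the while reaching some intermediate state $s_m'$, followed by $\upd(\dom(\U))$ taking $s_m'$ to $s_f'$. Since $\upd(\dom(\U))$ only writes variables of the form $x_{\jump(\Vcal'(x))}$ for $x \in \assd{C_t}$, and none of these coincides with any $\tosa{\Vcal'}(y)$, we have $s_f'(\tosa{\Vcal'}(x)) = s_m'(\tosa{\Vcal'}(x))$ for every $x$. It therefore suffices to prove $s_f(x) = s_m'(\tosa{\Vcal'}(x))$ for all $x \in \Var$.

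\emph{Induction.} I would then induct on $\eval{\while{b}{\Aerase{C_t}}}{s_i}{s_f}$, strengthening the statement to carry an entry invariant: at the start of each translated iteration, the $\tosa{\Vcal'}$-versioned variables of the current state hold the corresponding source-state values. For $k = 0$ this forces $s'(\tosa{\Vcal'}(x)) = s_i(x)$ for $x \in \assd{C_t}$, which is precisely what the prelude $\Ic$ of the enclosing for-loop translation sets up. In the base case ($\bsem{b}(s_i) = \Lfalse$, $s_f = s_i$), Lemma~\ref{lem:lemma2} (applied after repackaging the starting state into the form required by Lemma~\ref{lem:lemma1}) yields $\bsem{\tosa{\Vcal'}(b)}(s' \oplus \Vcal(s_i)) = \Lfalse$, so the translated while is empty, $s_m' = s' \oplus \Vcal(s_i)$, and the entry invariant (together with Lemma~\ref{lem:lemma4} to handle variables outside $\assd{C_t}$) closes the case. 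In the inductive case ($\bsem{b}(s_i) = \Ltrue$, with the source derivation splitting as $\eval{\Aerase{C_t}}{s_i}{s_j}$ and $\eval{\while{b}{\Aerase{C_t}}}{s_j}{s_f}$), Lemma~\ref{lem:lemma2} transfers the truth of $b$ to the translated guard, so one iteration of the body $\Aerase{\Tinv(C_t')} \sep \U$ executes. Proposition~\ref{prop:sa_translation_final_state} applied to $C_t$ with initial version function $\Vcal'$ gives that after $\Aerase{\Tinv(C_t')}$ the $\tosa{\Vcal''}$-versioned variables agree with $s_j$; the renaming $\U$ then assigns $x_{\Vcal'(x)} := x_{\Vcal''(x)}$ for each $x \in \assd{C_t}$, re-establishing the entry invariant with $s_j$ replacing $s_i$. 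The induction hypothesis applied to the remaining sub-derivation closes the case.

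\emph{Main obstacle.} The principal technical difficulty is the bookkeeping of three families of versioned variables --- the $\Vcal$-versions (never touched inside the loop body), the $\Vcal'$-versions (the ``current-iteration'' storage, which must be re-established by $\U$ at the end of every body), and the $\Vcal''$-versions (produced by $C_t'$ and matched to the post-body source state through the outer proposition) --- together with the repeated repackaging of states (via Lemma~\ref{lem:lemma1}) needed to apply Lemma~\ref{lem:lemma2} and the outer proposition at the entry of each iteration. Once the entry invariant is pinned down, each inductive step is a direct verification, but stating the invariant with enough precision to survive the recursion is where the bulk of the work sits.
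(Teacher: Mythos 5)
Your plan follows essentially the same route as the paper's proof: induction on the evaluation derivation of the source loop, unrolling one iteration, applying Proposition~\ref{prop:sa_translation_final_state} to the body $C_t$ as an outer structural induction hypothesis, and showing that $\U$ re-establishes the relation between the $\tosa{\Vcal'}$-versioned variables and the post-iteration source state before recursing. Your explicit separation of the trailing $\upd(\dom(\U))$ and your explicit entry invariant (forcing $s'(\tosa{\Vcal'}(x)) = s_i(x)$ on $\assd{C_t}$ at iteration zero) are in fact slightly more careful than the paper's own treatment, whose statement writes the initial state as $s' \oplus \Vcal(s_i)$ but whose proof silently works with $s' \oplus \Vcal'(s_i)$ throughout.
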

\begin{proof}
By induction on the derivation of the evaluation relation $\leadsto$.
Assume
\begin{align*}
&  \eval{\while{b}{\Aerase{C_t}}}{s_i}{s_f} \mbox{ and} \\
&\eval{\while{\Vcal'(b)}{\block{\Aerase{\Tinv(C_t')};\U};\upd(\dom(\U))}}{ s'
  \oplus \Vcal(s_i)}{s_f'}
\end{align*}
Two cases can occur:
\begin{itemize}
\item Case $\bsem{b}(s_i) = \Lfalse$, then $\bsem{\tosa{\Vcal'}(b)}(s' \oplus
  \Vcal'(s_i)) =\bsem{b}(s_i) = \Lfalse$  by Lemma~\ref{lem:lemma2}.
In this case we have $s_f=s_i$ and  $s_f' = s' \oplus
\Vcal'(s_i)$. Hence, $\forall x \in \Var.\, s_f'(\tosa{\Vcal'}(x)) = s_i(x) = s_f(x)$.

\item Case $\bsem{b}(s_i) = \Ltrue$, then $\bsem{\tosa{\Vcal'}(b)}(s' \oplus
  \Vcal'(s_i)) =\bsem{b}(s_i) = \Ltrue$  by Lemma~\ref{lem:lemma2}.
In this case we must have, for some $s_1\in\states$,
\begin{align}
&\eval{\Aerase{C_t}}{s_i}{s_1} \label{eq:whi1}\\
& \eval{\while{b}{\Aerase{C_t}}}{s_1}{s_f} \label{eq:wh2}
\end{align} 
and also, for some $s_0',s_1'\in\SAstates$,
\begin{align}
  & \eval{\Aerase{\Tinv(C_t')}}{s' \oplus
    \Vcal'(s_i)}{s_0'} \label{eq:wh3} \\
& s_0' = s_2' \oplus \Vcal''(s_i)  \label{eq:wh7} \\
 & \eval{\U}{s_0'}{s_1'} \label{eq:wh4} \\
& s_1' = s_0' [ x_{\new(x_{\Vcal(x)})} \mapsto
  \bsem{x_{\Vcal''(x)}}(s_0') \mid x\in\assd{C_t}] = s_0' \oplus
  \Vcal'(s_1) \label{eq:wh5} \\
   & \eval{\while{\tosa{\Vcal'}(b)} 
                                 {\block{\Aerase{\Tinv(C_t')};\U}}}{s_1'}{s_f'} \label{eq:wh6} 
\end{align}
Note that (\ref{eq:wh7}) follows from (\ref{eq:wh3}) by
Lemma~\ref{lem:lemma4}, and that justifies  (\ref{eq:wh5}).
From (\ref{eq:wh2}), (\ref{eq:wh6}) and (\ref{eq:wh5}), by induction
hypothesis, we get $\forall x \in \Var.\, s_f(x) = s_f'(\tosa{\Vcal'}(x))$.
\end{itemize}
\hfill$\Box$
\end{proof}

We will know prove that the $\tsa$ translation
preserves the operational semantics of the original programs. i.e.,
if the translated program executes in a state where the values 
of the input version of the variables coincide with the values of the
original variables in the initial state, then in the final state the values of
output versions of the variables are also equal to the values of the
original variables in the final state.

\begin{proposition}\label{prop:sa_translation_final_state}
Let $C \in \Acom$, $\Vcal \in \Var \to \mathbb{N}^+$, $s_i,s_f \in \states$, $s',s_f' \in
\SAstates$ and $\tsa(\Vcal, C) = (\Vcal',C')$. 
If $\eval{\Aerase{C}}{s_i}{s_f} $ and $\eval{\Aerase{\Tinv(C')}}{s'
  \oplus \Vcal(s_i)}{s_f'}$, then $\forall x \in \Var.\, s_f(x) = s_f'(\tosa{\Vcal'}(x))$.
\end{proposition}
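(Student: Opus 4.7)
The proof proceeds by induction on the structure of $C \in \Acom$. The overall strategy is: for each construct, unfold both evaluation judgements into their immediate subexecutions, use Lemmas~\ref{lem:lemma1}--\ref{lem:lemma4} to connect the SA state at each intermediate point with the corresponding While state through the intermediate version functions produced by $\tsa$, and then invoke the induction hypothesis on subprograms (or Lemma~\ref{lem:while} in the loop case).

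\emph{Base cases.} For $C = \skp$ we have $\Vcal' = \Vcal$, and both executions leave the state unchanged, so $s_f'(\tosa{\Vcal}(x)) = (s'\oplus\Vcal(s_i))(x_{\Vcal(x)}) = s_i(x) = s_f(x)$ for every $x$. For $C = \asgn{x}{e}$ the final SA state differs from $s'\oplus\Vcal(s_i)$ only at $x_{\nextt(\Vcal(x))}$, whose new value is $\bsem{\tosa{\Vcal}(e)}(s'\oplus\Vcal(s_i))$; by Lemma~\ref{lem:lemma2}(1) this equals $\bsem{e}(s_i) = s_f(x)$. For variables $y \neq x$, Lemma~\ref{lem:lemma4} gives $\Vcal'(y) = \Vcal(y)$, and neither execution changes the value at $y$.

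\emph{Sequence.} Decompose the hypotheses as $\eval{\Aerase{C_1}}{s_i}{s_1}$, $\eval{\Aerase{C_2}}{s_1}{s_f}$ and $\eval{\Aerase{\Tinv(C_1')}}{s'\oplus\Vcal(s_i)}{s_0'}$, $\eval{\Aerase{\Tinv(C_2')}}{s_0'}{s_f'}$. The induction hypothesis on $C_1$ gives $s_1(x) = s_0'(\tosa{\Vcal'}(x))$ for all $x$, whence by Lemma~\ref{lem:lemma1} there exists $s_2'$ with $s_0' = s_2' \oplus \Vcal'(s_1)$. This puts the second execution in exactly the right shape to apply the induction hypothesis to $C_2$ with initial state $s_1$ and version function $\Vcal'$, yielding the goal with the final version function $\Vcal''$.

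\emph{Conditional.} Split on $\bsem{b}(s_i)$; by Lemma~\ref{lem:lemma2}(2) the translated guard takes the same value. Apply the induction hypothesis to the taken branch (say $C_t$, producing $\Vcal'$). The subsequent execution of $\mergee(\Vcal',\Vcal'')$ only overwrites, for each $x$ with $\Vcal'(x) \prec \Vcal''(x)$, the slot $x_{\Vcal''(x)}$ with the value at $x_{\Vcal'(x)}$. Since $\supp(\Vcal',\Vcal'')(x) = \Vcal''(x)$ exactly on those variables, and equals $\Vcal'(x)$ otherwise, a case split on whether $\Vcal'(x) \prec \Vcal''(x)$ together with Lemma~\ref{lem:lemma4} applied to the \emph{untaken} branch (giving $\Vcal''(x) = \Vcal(x)$ off $\assd{C_f}$) shows that $s_f'(\tosa{\supp(\Vcal',\Vcal'')}(x))$ equals either the IH value or $s_i(x)$, matching $s_f(x)$ in both subcases.

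\emph{While.} This is the main obstacle, because one has to track the version function through three distinct fragments of the translated program: the prelude $\Ic$, the loop itself, and the post-loop renaming $\upd(\dom(\Uc))$. Decompose the translated execution accordingly: $\Ic$ takes $s'\oplus\Vcal(s_i)$ to a state of the form $s''\oplus\Vcal'(s_i)$ (directly from the shape of $\Ic$, since $\Ic$ only copies each $x_{\Vcal(x)}$ into $x_{\new(\Vcal(x))}$ for $x \in \assd{C_t}$), which matches the precondition of Lemma~\ref{lem:while} with version function $\Vcal'$. Lemma~\ref{lem:while} then gives $s_f(x) = s_B'(\tosa{\Vcal'}(x))$ for all $x$, where $s_B'$ is the state after the loop. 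Finally, $\upd(\dom(\Uc))$ copies each $x_{\Vcal'(x)}$ into $x_{\jump(\Vcal'(x))} = x_{\Vcal'''(x)}$ for $x \in \assd{C_t}$, so $s_f'(\tosa{\Vcal'''}(x)) = s_B'(\tosa{\Vcal'}(x)) = s_f(x)$ for such $x$; for $x \notin \assd{C_t}$, Lemma~\ref{lem:lemma4} yields $\Vcal'''(x) = \Vcal(x)$, and neither the original loop nor any component of the translated program touches $s_i(x)$, so equality is preserved trivially. The bookkeeping around the $\jump$/$\new$ indices is the one step that requires care; everything else is routine chasing through the definitions.
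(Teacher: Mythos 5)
Your proposal is correct and follows the same skeleton as the paper's proof: structural induction on $C$, with the base cases, the sequence case (IH on $C_1$, then Lemma~\ref{lem:lemma1} to recover the $\oplus$-shape, then IH on $C_2$), and the conditional case (guard transfer via Lemma~\ref{lem:lemma2}, then a case split on $\Vcal'(x)\prec\Vcal''(x)$ to analyse $\mergee$ against $\supp$) all matching the paper essentially verbatim; your detour through $s_i(x)$ in the conditional is unnecessary, since both subcases already reduce to the IH value $s_0'(\tosa{\Vcal'}(x))=s_f(x)$, but it is not wrong. The one genuine divergence is the while case: you establish the post-$\Ic$ state shape $s''\oplus\Vcal'(s_i)$ and then apply Lemma~\ref{lem:while} \emph{directly} to the entire loop execution, whereas the paper first case-splits on $\bsem{b}(s_i)$, handles the zero-iteration case by hand, and in the positive case unfolds one iteration (applying the structural IH to the body and reassembling the state via Lemma~\ref{lem:lemma1}) before invoking Lemma~\ref{lem:while} only on the remaining iterations. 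Your version is more streamlined and is legitimate if Lemma~\ref{lem:while} is taken as an already-established standalone fact (reading its initial state as $s'\oplus\Vcal'(s_i)$, as the lemma's own proof does); the paper's extra unfolding has the virtue of making explicit where the structural induction hypothesis for the loop body enters, which is the delicate point, since Lemma~\ref{lem:while} itself silently depends on the proposition holding for the body. If you take your shortcut, you should at least remark that Lemma~\ref{lem:while} is available for the body $C_t$ because the structural IH applies to it --- otherwise the mutual dependency between the lemma and the proposition is left unaddressed.
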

\begin{proof}
By induction on the structure of $C$.
\begin{itemize}
\item Case $C \equiv \skp$. 
The hypothesis are:
\begin{align*}
&\tsa(\Vcal,\skp) = (\Vcal,\skp) \\
& \eval{\Aerase{\skp}}{s_i}{s_i} \\ 
&\eval{\Aerase{\Tinv(\skp)}}{s' \oplus \Vcal(s_i)}{s' \oplus \Vcal(s_i)}
\end{align*}
As for every $x \in \Var$, we have $(s' \oplus
\Vcal(s_i))(\tosa{\Vcal}(x)) = s_i(x)$, we are done.
 
\item Case $C \equiv \asgn{x}{e}$. The hypothesis are:
\begin{align*}
&\tsa(\Vcal,\asgn{x}{e}) = (\Vcal[x \mapsto \nextt(\Vcal(x))], 
                      \asgn{x_{\nextt(\Vcal(x))}}{\tosa{\Vcal}(e))} \\
&\eval{\Aerase{\asgn{x}{e}}}{s_i}{s_f}  \mbox{ \ with \ }  s_f  = s_i[x
                                                          \mapsto \bsem{e}{(s_i)}]\\
&\eval{\Aerase{\Tinv(\asgn{x_{\nextt(\Vcal(x))}}{\tosa{\Vcal}(e))})}}{s'
  \oplus \Vcal(s_i)}{s_f'} \mbox{  \ with}  \\
& s_f' = (s' \oplus \Vcal(s_i))[x_{\nextt(\Vcal(x))} 
           \mapsto \bsem{\tosa{\Vcal}(e)}
                   {(s' \oplus \Vcal(s_i))}] \label{eq:sa_trans_assign_2} 
 \end{align*}
We want to prove that $\forall y\in\Var.\, s_f(y) = s_f'(\tosa{\Vcal[x \mapsto \nextt(\Vcal(x))]}(y))$.
Two cases can occur:
\begin{itemize}
\item If $y = x$, we have $s_f(y) = s_f(x) = \bsem{e}{(s_i)}$ and
$$  s_f'(\tosa{\Vcal[x \mapsto \nextt(\Vcal(x))]}(x)) =
s_f'(x_{\nextt(\Vcal(x))}) =
\bsem{\tosa{\Vcal}(e)}{(s' \oplus \Vcal(s_i))} =
\bsem{e}(s_i)
$$
\item If $y\neq x$, we have $s_f(y) = s_i(y)$ and
$$
s_f'(\tosa{\Vcal[x \mapsto \nextt(\Vcal(x))]}(y)) =
s_f'(\tosa{\Vcal}(y))  =
(s' \oplus \Vcal(s_i))(\tosa{\Vcal}(y))  =
s_i(y)
$$
\end{itemize}

\item Case $C \equiv C_1 ; C_2$. The hypothesis are:
\begin{align*}
&\tsa(\Vcal, C_1;C_2) = (\Vcal'', C_1';C_2') \\ 
 &\hspace{2.5cm}\mbox{ with }\tsa(\Vcal,C_1) = (\Vcal',C_1')  \mbox{ and } \tsa(\Vcal',C_2) =
  (\Vcal'',C_2')\\
&\eval{\Aerase{C_1;C_2}}{s_i}{s_f}\\
&\eval{\Aerase{\Tinv(C_1';C_2')}}{s \oplus \Vcal(s_i)}{s_f'}\\
\end{align*}
We must have, for some $s_0 \in \states$ and $s_0' \in \SAstates$
\begin{eqnarray}
\label{eq:sa_trans_seq1}  \eval{\Aerase{C_1}}{s_i}{s_0}\\ 
\label{eq:sa_trans_seq2} \eval{\Aerase{C_2}}{s_0}{s_f} \\ 
\label{eq:sa_trans_seq3}  \eval{\Aerase{\Tinv(C_1')}}{s \oplus \Vcal(s_i)}{s_0'} \\ 
\label{eq:sa_trans_seq4} \eval{\Aerase{\Tinv(C_2')}}{s_0'}{s_f'} 
\end{eqnarray}
From (\ref{eq:sa_trans_seq1}) and (\ref{eq:sa_trans_seq3}), by induction hypothesis, we
have $\forall x \in \Var.\, s_0(x) =
s_0'(\tosa{\Vcal'}(x))$. Therefore, by Lemma~\ref{lem:lemma1}, we have
$s_0' = s_1' \oplus \Vcal'(s_0) $ for some
$s_i'\in\SAstates$. Consequently, from  (\ref{eq:sa_trans_seq2}) and
(\ref{eq:sa_trans_seq4}), by induction hypothesis, we conclude that 
$\forall x \in \Var.\, s_f(x) = s_f'(\tosa{\Vcal''}(x))$.

\item Case $C \equiv \ifte{b}{C_t}{C_f}$.   The hypothesis are:
\begin{align*}
\tsa(\Vcal, C)  = (\supp(\Vcal',\Vcal''), \mathbf{if}\;
          {\tosa{\Vcal}(b)} \; & \mathbf{then}\; {\block{C_t';\mergee(\Vcal',\Vcal'')}} \\
 &  \mathbf{else}\: {\block{C_f';\mergee(\Vcal'',\Vcal')}} \; )\\
       \mbox{with }  \tsa(\Vcal,C_t) = (\Vcal',C_t')   
        & \mbox{ and }   \tsa(\Vcal,C_f) = (\Vcal'',C_f')
\end{align*}

\begin{tabular}{lll}
\hspace{1cm} &\multicolumn{2}{l}{$\eval{\Aerase{\ifte{b}{C_t}{C_f}}}{s_i}{s_f} $} \\
&$\langle \mathbf{if}\;  {\tosa{\Vcal}(b)} \; $ & $\mathbf{then}\;\block{C_t';\mergee(\Vcal',\Vcal'')}$ \\
& & $\mathbf{else}\;\block{C_f';\mergee(\Vcal'',\Vcal')} ,{s' \oplus \Vcal(s_i)} \rangle\!\leadsto\! {s_f'} $
\end{tabular}
\begin{itemize}
\item Case $\bsem{b}(s_i) = \Ltrue$, then $\bsem{\tosa{\Vcal}(b)}(s' \oplus
  \Vcal(s_i)) = \Ltrue$  by Lemma~\ref{lem:lemma2}.
Therefore one must have, for some $s_0'\in\SAstates$,
\begin{eqnarray}
 \label{eq:sa_trans_if1} &\eval{\Aerase{C_t}}{s_i}{s_f} \\
 \label{eq:sa_trans_if2} &\eval{\Aerase{\Tinv(C_t')}}{s' \oplus
  \Vcal(s_i)} {s_0'} \\
 \label{eq:sa_trans_if3} &\eval{\mergee(\Vcal',\Vcal'')} {s_0'}{s_f'} 
\end{eqnarray}
From (\ref{eq:sa_trans_if1}) and (\ref{eq:sa_trans_if2}), by induction 
hypothesis we have that 
\begin{eqnarray}
\forall x \in \Var.\, s_f(x) = s_0'(\tosa{\Vcal'}(x)) \label{eq:sa_trans_if4}
\end{eqnarray}
$\mergee\ (\Vcal',\Vcal'') = [\asgn{x_{\Vcal''(x)}}{x_{\Vcal'(x)}} \mid
                                          x \in \V \wedge \Vcal'(x)
                                          \prec \Vcal''(x)] $ so, 
$s_f' = s_0' [ x_{\Vcal''(x)} \mapsto \bsem{x_{\Vcal'(x)}}(s_0') \mid x \in \Var
\wedge \Vcal'(x) \prec \Vcal''(x) ]$. Moreover,
\begin{align*}
  \supp\ (\Vcal',\Vcal'')(x) =  \left\{ 
        \begin{array}{l l}
          \Vcal'(x) & \caseif\ \Vcal''(x) \prec \Vcal'(x) \\
          \Vcal''(x) & \otherwise
        \end{array} \right.\
\end{align*}
We will now  prove that $\forall x\in\Var.\, s_f(x) =
s_f'(\tosa{\supp(\Vcal',\Vcal'')}(x))$. Let $x\in\Var$.
\begin{itemize}
\item If $\Vcal'(x) \prec \Vcal''(x)$, then
$s_f'(\tosa{\supp(\Vcal',\Vcal'')}(x)) = s_f'(\tosa{\Vcal''}(x)) =
\bsem{x_{\Vcal'(x)}}(s_0')= s_0'(\tosa{\Vcal'}(x)) = s_f(x)$ by (\ref{eq:sa_trans_if4}).
\item  If $\Vcal'(x) \not\prec \Vcal''(x)$, then
  $s_f'(\tosa{\supp(\Vcal',\Vcal'')}(x)) = s_f'(\tosa{\Vcal'}(x)) = s_0'(\tosa{\Vcal'}(x)) = s_f(x)$ by (\ref{eq:sa_trans_if4}).
\end{itemize}

\item Case $\bsem{b}(s_i) = \Lfalse$. Analogous to the previous case.
\end{itemize}

\item Case $C \equiv \whileinv{b}{\theta}{C_t}$.
The hypothesis are:
      \begin{align}
        \tsa(\Vcal, \whileinv{b}{\theta}{C_y}) 
          &= (\Vcal''', \forinv{\I}
                              {\tosa{\Vcal'}(b)}
                              {\U}
                              {\tosa{\Vcal'}(\theta)}
                              {C_t'}
            )\\
          \mbox{with } &\I = [x_{\new(\Vcal(x))} 
                      | x \in \assd{C_t}] \label{eq:sa_trans_whi1}\\
                &\Vcal' = \Vcal[x \mapsto \new(\Vcal(x)) 
                               \mid x \in \assd{C_t}] \label{eq:sa_trans_whi2}\\
                &(\Vcal'',C_t') = \tsa(\Vcal',C_t) \label{eq:sa_trans_whi3}\\
                &\U = [\asgn{x_{\new(\Vcal(x))}}
                            {x_{\Vcal''(x)}}
                      \mid x \in \assd{C_t}] \label{eq:sa_trans_whi4}\\
                &\Vcal''' = \Vcal''[x \mapsto \jump(l) 
                                   \mid x_l \in
                  \dom(\U)]\label{eq:sa_trans_whi5} 
      \end{align}
 \begin{align}
 & \eval{\while{b}{\Aerase{C_t}}}{s_i}{s_f} \label{eq:sa_trans_whi6} \\
& \eval{\I;\while{\Vcal'(b)}{\block{\Aerase{\Tinv(C_t')};\U};\upd(\dom(\U))}}{s' \oplus \Vcal(s_i)}{s_f'} \label{eq:sa_trans_whi7} 
 \end{align}
From (\ref{eq:sa_trans_whi1}), (\ref{eq:sa_trans_whi2}) and (\ref{eq:sa_trans_whi7}), we must have for some $s_0', s_1' \in \SAstates$ that: 
      \begin{align}
        &\eval{\I}{s' \oplus s_i}{s_1'} \nonumber \\
        &\eval{\while{\Vcal'(b)}{\block{\Aerase{\Tinv(C_t')} ; \U} ;
        \upd(\dom(\U))}}{s_1'}{s_f'}   \label{eq:sa_trans_whi13} \\
       & s_1' = s_0' \oplus \Vcal'(s_i)  \nonumber
      \end{align}
There are the following cases to consider:
\begin{itemize}
\item Case $\bsem{b}{(s_i)} = \Lfalse$, then $s_f = s_i$ and
  $\bsem{\tosa{\Vcal'}(b)}{(s' \oplus \Vcal'(s_i))} = \Lfalse$, by Lemma~\ref{lem:lemma2}.
  Therefore one must have
\begin{align*}
            &\eval{\while{b}{\Aerase{C_t}}}{s_i}{s_i}\\
            &\eval{\while{\Vcal'(b)}
                                 {\block{\Aerase{\Tinv(C_t')} ; \U}}}
                          {s_1'}
                          {s_1'}  \\
           & \eval{\upd(\dom(\U))}{s_1'}{s_f'}
          \end{align*}
Moreover, we know that $\upd(\dom(\U)) = [\asgn{x_{\jump(l)}}{x_l} \mid
x_l \in \dom(\U)]$ so
\begin{align}
 &\eval{\upd(\dom(\U))}{s_0' \oplus \Vcal'(s_i)}
  {s_f'} \label{eq:sa_trans_whi6} \\
& s_f' =(s_0' \oplus \Vcal'(s_i))[x_{\jump(l)} \mapsto \bsem{x_l}{(s_0' \oplus \Vcal'(s_i))}
                                            \mid x_l \in \dom(\U)]
\end{align}
We now prove that $\forall x\in\Var.\, s_f(x) =
s_f'(\tosa{\Vcal'}(x))$.
\begin{itemize}
\item If $y \not\in \assd{C_t}$, then
$ s_f'(\tosa{\Vcal'''}(y)) = s_f'(\tosa{\Vcal''}(y)) =
s_f'(\tosa{\Vcal'}(y)) = s_i(y) $, using Lemma~\ref{lem:lemma4}.          

\item If $x \in \assd{C_t}$, then 
$s_f'(\tosa{\Vcal'''}(x)) = s_f'(x_{\jump(l)})$ for some $x_l \in \dom(\U)$ and
$s_f'(x_{\jump(l)}) = \bsem{x_l}{(s_0' \oplus \Vcal'(s_i))}
= \bsem{\tosa{\Vcal'}(x)}{(s_0' \oplus \Vcal'(s_i))}
= \bsem{x}(s_i)
= s_i(x)$, using Lemma~\ref{lem:lemma1}.     
\end{itemize}

\item Case $\bsem{b}{(s_i)} = \Ltrue$, then
 $\bsem{\tosa{\Vcal'}(b)}{(s_0' \oplus \Vcal'(s_i))} = \Ltrue$,  by Lemma~\ref{lem:lemma2}.
From (\ref{eq:sa_trans_whi6}), we must have, for some $s_2\in\states$,
          \begin{align}
            &\eval{\Aerase{C_t}}{s_i}{s_2} \label{eq:sa_trans_whi7}\\
            & \eval{\while{b}
                          {\Aerase{C_t}}}{s_2}
                          {s_f} \label{eq:sa_trans_whi8}
          \end{align} 
 and also, from (\ref{eq:sa_trans_whi13}), for some $s_2',s_3',s_4'\in\SAstates$,
 \begin{align}
  & \eval{\Aerase{\Tinv(C_t')}}{s_0' \oplus \Vcal'(s_i)}{s_4'} \label{eq:sa_trans_whi9} \\
 & \eval{\U}{s_4'}{s_2'} \label{eq:sa_trans_whi10} \\
& s_2' = s_4' [ x_{\new(x_{\Vcal(x)})} \mapsto
  \bsem{x_{\Vcal''(x)}}(s_4') \mid x\in\assd{C_t}] \label{eq:sa_trans_whi11} \\
   & \eval{\while{\tosa{\Vcal'}(b)} 
                                 {\block{\Aerase{\Tinv(C_t')};\U}}}{s_2'}{s_3'} \label{eq:sa_trans_whi12} \\
         & \eval{\upd(\dom(\U))}{s_3'}{s_f'}
\end{align}
From (\ref {eq:sa_trans_whi7}), (\ref{eq:sa_trans_whi9}) and
(\ref{eq:sa_trans_whi3}), by induction hypothesis, we have 
that $\forall x\in\Var.\, s_2(x) = s_4'(\tosa{\Vcal''}(x))$.
Using  this fact, (\ref{eq:sa_trans_whi11}) and
(\ref{eq:sa_trans_whi2}), we can conclude that $s_2' = s_4' 
\oplus \Vcal'(s_2)$. Because of this, (\ref{eq:sa_trans_whi8}) and
(\ref{eq:sa_trans_whi12}), we get by Lemma~\ref{lem:while}  
\begin{align}
  \forall x\in\Var.\, s_f(x) = s_3'(\tosa{\Vcal'}(x))  \label{eq:7}
\end{align}

$\upd(\dom(\U)) = [ \asgn{x_{\jump(l)}}{x_l} \mid x_l \in\dom(\U) ]$
and $\eval{\upd(\dom(\U)) }{s_3'}{s_f'}$, so $s_f'= s_3' [ x_{\jump(l)}
\mapsto \bsem{x_l}(s_3') \mid x_l \in \dom(\U) ]$.
We will now  prove that $\forall x\in\Var.\, s_f(x) =
s_f'(\tosa{\Vcal'''}(x))$. 
\begin{itemize}
\item If $y\not\in \assd{C_t}$, then $s_f'(\tosa{\Vcal'''}(y)) =
  s_f'(\tosa{\Vcal''}(y)) = s_3'(\tosa{\Vcal'}(y)) = s_f(y)$, using
  Lemma~\ref{lem:lemma4} and (\ref{eq:7}).
\item If $x\in \assd{C_t}$, then $s_f'(\tosa{\Vcal'''}(x)) =
  s_f'(x_{\jump(l)})$ for some $x_l=x_{\new(\Vcal(c)}\in\dom(\U)$, and
  $s_f'(x_{\jump(l)}) = \bsem{x_l}(s_3') = s_f(x)$, using (\ref{eq:7}).
\end{itemize}

\end{itemize}

\end{itemize}
\hfill$\Box$
\end{proof}

We now prove that the translation function is sound, i.e., if
the translated triple is valid then the original triple must also be
valid. We need the following lemma.

\begin{lemma}\label{lem:lemma5}
 Let $C \in \Acom$, $\Vcal \in \Var \to \mathbb{N}^+$, and $s_i,
 s_f\in\states$. 
If  $\eval{\Aerase{C}}{s_i}{s_f}$  and $\tsa(\Vcal, C) = (\Vcal',C')$,
then $\eval{\Aerase{\Tinv(C')}}{s_1'\oplus\Vcal(s_i)}{s_2'\oplus \Vcal'(s_f)}$, for some
$s_1',s_2'\in \SAstates$.
\end{lemma}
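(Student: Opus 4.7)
I would prove this by structural induction on $C$, following the same case analysis as Proposition~\ref{prop:sa_translation_final_state}, but now the obligation is dual: rather than characterising the final SA state given that \emph{both} executions terminate, I must \emph{produce} the SA execution from the original one and show its final state has the required form $s_2'\oplus\Vcal'(s_f)$. In each case the existence of the SA execution is automatic (the SA program has no constructs that can diverge except for loops, and loops in the SA translation have a controlled shape), while the shape of the final state follows by unfolding the definitions of $\tsa$ and $\Tinv$ and invoking Lemmas~\ref{lem:lemma1}, \ref{lem:lemma2}, and~\ref{lem:lemma4}.

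The easy cases are routine. For $\skp$ take $s_2'=s_1'$. For $\asgn{x}{e}$, the SA command is $\asgn{x_{\nextt(\Vcal(x))}}{\tosa{\Vcal}(e)}$; by Lemma~\ref{lem:lemma2} its value coincides with $\bsem{e}(s_i)=s_f(x)$, so the resulting SA state has the form $s_2'\oplus\Vcal'(s_f)$ for the appropriate $s_2'$ (using Lemma~\ref{lem:lemma4} to handle unassigned variables). For $C_1;C_2$, the first IH gives an SA execution ending in a state $s_0''\oplus\Vcal'(s_0)$, where $s_0$ is the intermediate state in the original execution; this matches the shape required to apply the second IH. For $\ifte{b}{C_t}{C_f}$, Lemma~\ref{lem:lemma2} shows that the guard evaluates the same way in the SA state, and then the IH on the executed branch yields an SA execution ending in $s_0'\oplus\Vcal_t(s_f)$ (or $\Vcal_f$); the subsequent deterministic $\mergee$ renaming supplies the missing versions so that the final SA state has the form $s_2'\oplus\supp(\Vcal',\Vcal'')(s_f)$.

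The main obstacle is the while case, where I need to match the unbounded iterative structure of the original loop with the SA loop augmented by $\Ic$ (initialisation), $\Uc$ (update after each body), and $\upd(\dom(\Uc))$ (jump back to the outer context). I would establish it via an auxiliary claim by induction on the derivation of $\eval{\while{b}{\Aerase{C_t}}}{s_i}{s_f}$, asserting that for every state of the form $s'\oplus\Vcal'(s_i)$ there is an SA execution of $\while{\Vcal'(b)}{\{\Aerase{\Tinv(C_t')};\Uc\}}$ ending in a state of the form $s''\oplus\Vcal'(s_f)$. In the zero-iteration case the guard falsifies (by Lemma~\ref{lem:lemma2}), and since $\SA{\Vcal'(b)}{s'\oplus\Vcal'(s_i)}$ already agrees on $\Vcal'(s_i)=\Vcal'(s_f)$, we are done. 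In the step case, the IH of the outer induction applied to the body $C_t$ yields an SA execution of $\Aerase{\Tinv(C_t')}$ ending in some $s^{\ast}\oplus\Vcal''(s_1)$ (where $s_1$ is the post-body state), then $\Uc$ deterministically propagates $\Vcal''(s_1)$ back to $\Vcal'(s_1)$ (using $\assd{\Ic}=\assd{\Uc}$ and Lemma~\ref{lem:lemma4}), putting us in the shape required to apply the auxiliary IH to the remaining iterations.

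Finally, I would compose three pieces: the execution of $\Ic$ converts $s_1'\oplus\Vcal(s_i)$ into some $s''\oplus\Vcal'(s_i)$; the auxiliary claim above produces an SA execution of the loop ending in some $s'''\oplus\Vcal'(s_f)$; and the trailing $\upd(\dom(\Uc))$ runs deterministically, updating each $x_{\Vcal'(x)}$ entry to $x_{\Vcal'''(x)}$ (for $x\in\assd{C_t}$) while leaving everything else unchanged. Using the definition $\Vcal'''=\Vcal''[x\mapsto\jump(l)\mid x_l\in\dom(\Uc)]$ together with Lemma~\ref{lem:lemma4} for variables outside $\assd{C_t}$, the resulting final state indeed has the form $s_2'\oplus\Vcal'''(s_f)$, completing the case.
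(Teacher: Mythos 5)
Your proposal is correct and follows essentially the same route as the paper: structural induction on $C$ (with an inner induction on the evaluation derivation for the while case, mirroring Lemma~\ref{lem:while}), relying on Lemmas~\ref{lem:lemma1}, \ref{lem:lemma2} and~\ref{lem:lemma4}. The only difference is that the paper's one-line proof delegates the shape of the final SA state to Proposition~\ref{prop:sa_translation_final_state} once termination of the SA run from $s_1'\oplus\Vcal(s_i)$ is established, whereas you re-derive that state correspondence inline; this is redundant but harmless.
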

\begin{proof}
By induction  on the structure of $C$ using
Proposition~\ref{prop:sa_translation_final_state}. 
\hfill $\Box$
\end{proof}

\begin{proposition}\label{prop:TsaSound}
  Let $C\in \Acom$, $\phi,\psi\in\assert$, $\Vcal \in \Var \to
  \mathbb{N}^+$ and $\tsa(\Vcal,C) = (\Vcal',C')$.
If $\:\models \hoatri{\tosa{\Vcal}(\phi)}{\Aerase{\Tinv(C')}}{\tosa{\Vcal'}(\psi)} $,
then $\:\models \hoatri{\phi}{\Aerase{C}}{\psi}$.
\end{proposition}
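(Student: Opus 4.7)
The plan is to unfold the definition of validity of a Hoare triple, and then use the two auxiliary results already proved: Lemma~\ref{lem:lemma5}, which lifts any concrete execution of $\Aerase{C}$ to an execution of $\Aerase{\Tinv(C')}$ on a suitable SA-state extension, and Lemma~\ref{lem:lemma2}(2), which converts between assertion satisfaction on the original state and on the SA-state after renaming variables via the version function.

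Concretely, I would fix $s, s' \in \states$ with $s \models \phi$ and $\eval{\Aerase{C}}{s}{s'}$, and aim to show $s' \models \psi$. From Lemma~\ref{lem:lemma5} applied to this execution, I obtain $s_1', s_2' \in \SAstates$ such that
\[
\eval{\Aerase{\Tinv(C')}}{s_1' \oplus \Vcal(s)}{s_2' \oplus \Vcal'(s')}.
\]
Next I would use Lemma~\ref{lem:lemma2}(2) to reinterpret the precondition: since $s \models \phi$, we have $\bsem{\tosa{\Vcal}(\phi)}(s_1' \oplus \Vcal(s)) = \bsem{\phi}(s) = \Ltrue$, i.e.\ $s_1' \oplus \Vcal(s) \models \tosa{\Vcal}(\phi)$. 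By the hypothesis $\models \hoatri{\tosa{\Vcal}(\phi)}{\Aerase{\Tinv(C')}}{\tosa{\Vcal'}(\psi)}$ and Definition~\ref{def:hoaretriple-validity}, this yields $s_2' \oplus \Vcal'(s') \models \tosa{\Vcal'}(\psi)$.

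Finally, a second application of Lemma~\ref{lem:lemma2}(2) on the postcondition gives $\bsem{\psi}(s') = \bsem{\tosa{\Vcal'}(\psi)}(s_2' \oplus \Vcal'(s')) = \Ltrue$, hence $s' \models \psi$, completing the argument. There is essentially no obstacle here: all the real work has already been done in Proposition~\ref{prop:sa_translation_final_state} and Lemma~\ref{lem:lemma5} (the operational-semantics preservation result, including the delicate while-case handled by Lemma~\ref{lem:while}), and in Lemma~\ref{lem:lemma2} (the compatibility of SA-renaming with assertion semantics). The proof of the proposition itself is just a short glueing of these pieces via the definition of triple validity.
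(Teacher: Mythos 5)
Your proposal is correct and follows essentially the same route as the paper's own proof: fix an execution of $\Aerase{C}$, lift it via Lemma~\ref{lem:lemma5}, and transfer the pre- and postconditions back and forth with Lemma~\ref{lem:lemma2}(2) using the validity of the translated triple. No gaps.
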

\begin{proof}
 Let $\tsa(\Vcal,C) = (\Vcal',C')$ and
 \begin{align}
   \models
   \hoatri{\tosa{\Vcal}(\phi)}{\Aerase{\Tinv(C')}}{\tosa{\Vcal'}(\psi)}  \label{eq:p1}
 \end{align}
We want to prove that $\models \hoatri{\phi}{\Aerase{C}}{\psi}$, so
assume, for some $s_i,s_f\in\states$, that
\begin{align}
  \bsem{\phi}(s_i)=\Ltrue \label{eq:p2}\\
  \eval{\Aerase{C}}{s_i}{s_f} \label{eq:p3}
\end{align}
From (\ref{eq:p3}), by Lemma~\ref{lem:lemma5}, we have for some $s_1',s_2'\in\SAstates$
\begin{align}
  \eval{\Aerase{\Tinv(C')}}{s_1'\oplus\Vcal(s_i)}{s_2'\oplus\Vcal'(s_f)} \label{eq:p4}
\end{align}
From (\ref{eq:p2}),  by Lemma~\ref{lem:lemma2}, we have 
$\bsem{\tosa{\Vcal}(\phi)}(s_1'\oplus\Vcal(s_i))=\Ltrue$.
Thus, since we have  (\ref{eq:p1}) and (\ref{eq:p4}), we get 
$\bsem{\tosa{\Vcal'}(\psi)}(s_2'\oplus\Vcal'(s_f)) = \Ltrue$ and, by
Lemma~\ref{lem:lemma2}, it follows that $\bsem{\psi}(s_f)=\Ltrue$.
Hence  $\:\models \hoatri{\phi}{\Aerase{C}}{\psi}$ holds.
%
\hfill$\Box$
\end{proof}

Finally we will show that the translation $\tsa$ preserves  $\Hg$-derivations, i.e., if
a Hoare triple for an annotated program is derivable in $\Hg$, then the
translated triple is also derivable in $\Hg$. 
Again we start by proving some auxiliary lemmas.

\begin{lemma}\label{lem:merge_sup}
Let $\Vcal,\Vcal' \in \Var \to \mathbb{N}^+$ and $\psi \in \assert$. 
The following derivations hold
\begin{enumerate}
\item $\infHg \hoatri{\tosa{\Vcal}(\psi)}
                {\mergee(\Vcal,\Vcal')}
                {\tosa{\supp(\Vcal,\Vcal')}(\psi)}$
\item $\infHg \hoatri{\tosa{\Vcal}(\psi)}
                {\mergee(\Vcal',\Vcal)}
                {\tosa{\supp(\Vcal,\Vcal')}(\psi)}$            
\end{enumerate}
\end{lemma}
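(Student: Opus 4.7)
My plan is to reduce both items to a uniform fact about renamings and then verify the required equalities by pointwise case analysis on the version functions. Both triples feature a program drawn from $\rnm$, i.e.\ a sequential composition of assignments $\asgn{z_i}{w_i}$ whose source and target variables are pairwise distinct (Definition~\ref{def:rnm}).

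The first step would be to establish the following reusable auxiliary statement: for every $\R = \asgn{z_1}{w_1} \sep \ldots \sep \asgn{z_n}{w_n} \in \rnm$ and every assertion $\chi$, we have $\infHg \hoatri{\R(\chi)}{\R}{\chi}$, where $\R(\chi)$ is the substitution $\chi[\subst{z_1}{w_1}, \ldots, \subst{z_n}{w_n}]$. The pairwise disjointness in $\rnm$ ensures that simultaneous and iterated substitution coincide. The derivation is by induction on $n$, taking intermediate assertions $\theta_k = \chi[\subst{z_{k+1}}{w_{k+1}}, \ldots, \subst{z_n}{w_n}]$ (so $\theta_n = \chi$ and $\theta_0 = \R(\chi)$); each leaf is an (assign) whose side condition $\theta_{k-1} \equiv \theta_k[\subst{z_k}{w_k}]$ holds by construction, and repeated applications of (seq) glue the steps together.

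With this tool in hand, item~1 reduces to checking the syntactic identity $\mergee(\Vcal,\Vcal')(\tosa{\supp(\Vcal,\Vcal')}(\psi)) = \tosa{\Vcal}(\psi)$. I would verify this pointwise on each variable $x$: if $\Vcal(x) \prec \Vcal'(x)$, then $\supp(\Vcal,\Vcal')(x) = \Vcal'(x)$, so $x$ occurs as $x_{\Vcal'(x)}$ in $\tosa{\supp(\Vcal,\Vcal')}(\psi)$ and the substitution maps it to $x_{\Vcal(x)}$; otherwise $\supp(\Vcal,\Vcal')(x) = \Vcal(x)$ and the occurrence $x_{\Vcal(x)}$ lies outside the domain of $\mergee(\Vcal,\Vcal')$ and is left unchanged. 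In either case the variable ends up as $x_{\Vcal(x)}$, which is exactly what $\tosa{\Vcal}(\psi)$ prescribes. Item~2 is handled by the same analysis applied to $\mergee(\Vcal',\Vcal)$, using the symmetry $\supp(\Vcal,\Vcal') = \supp(\Vcal',\Vcal)$ to read the postcondition in the convenient order.

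The main subtlety is the substitution bookkeeping: one must justify carefully that a sequential composition of single-variable substitutions in a renaming agrees with a single simultaneous substitution (via the distinctness conditions), and then carry out the case analysis without confusing occurrences that have and have not yet been rewritten. Once that is settled, both $\Hg$-derivations fall out mechanically from the auxiliary statement.
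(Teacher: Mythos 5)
Your proof is correct and follows essentially the same route as the paper's: the paper likewise derives the triple by chaining (assign) steps with (seq) and observing that the renaming substitution applied to the postcondition returns the precondition, which is exactly your auxiliary fact about renamings plus your pointwise verification of the identity $\mergee(\Vcal,\Vcal')(\tosa{\supp(\Vcal,\Vcal')}(\psi)) = \tosa{\Vcal}(\psi)$. One remark on item~2: your symmetry argument actually establishes the triple with precondition $\tosa{\Vcal'}(\psi)$ rather than the stated $\tosa{\Vcal}(\psi)$ --- but that is the version the main completeness proof actually invokes, and the literal statement would not be derivable when $\Vcal'(x)\prec\Vcal(x)$ for some $x\in\FV{\psi}$ (the renaming overwrites exactly the version the precondition constrains), so your reading is the defensible one.
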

\begin{proof} 
1. 
We have
$\mergee\ (\Vcal,\Vcal') = [\asgn{x_{\Vcal'(x)}}{x_{\Vcal(x)}} \mid
                            x \in \V \wedge \Vcal(x) \prec \Vcal'(x)]
$
and
$$\supp\ (\Vcal,\Vcal')(x) =  \left\{ 
        \begin{array}{l l}
          \Vcal(x) & \caseif\ \Vcal'(x) \prec \Vcal(x) \\
          \Vcal'(x) & \otherwise
        \end{array} \right.
$$
$\infHg \hoatri{\tosa{\Vcal}(\psi)}
                {\mergee(\Vcal,\Vcal')}
                {\tosa{\supp(\Vcal,\Vcal')}(\psi)}$ 
follows from successively applying the (assign) and (seq) rules, using
as precondition the postcondition with the substitution
$[\subst{x_{\Vcal'(x)}}{x_{\Vcal(x)}}]$ for each assigment $\asgn{x_{\Vcal'(x)}}{x_{\Vcal(x)}}$  of the
renaming sequence, since 
$\tosa{\supp(\Vcal,\Vcal')}(\psi)
      [\subst{x_{\Vcal'(x)}}{x_{\Vcal(x)}}  \mid
       x \in \V \wedge \Vcal(x) \prec \Vcal'(x)] = \tosa{\Vcal}(\psi)$.

2. Similar.
\hfill$\Box$
\end{proof}

\begin{lemma}\label{lem:i_pre_post}
Let $\I \in \rnm$ and $\phi \in \assert$. 
\[
\infHg \hoatri{\phi}{\I}{\I^{-1}(\phi)}
\]
\end{lemma}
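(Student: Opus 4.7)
The plan is to build the $\Hg$-derivation by structural decomposition of $\I = \asgn{x_1}{y_1} \sep \cdots \sep \asgn{x_n}{y_n}$, proceeding by induction on $n$ (equivalently, iterating the (seq) rule $n-1$ times) and discharging the resulting $n$ leaves with the (assign) rule. Choosing suitable interpolating assertions $\phi_0 = \phi,\, \phi_1,\, \ldots,\, \phi_n = \I^{-1}(\phi)$, each leaf will have the form $\hoatri{\phi_{k-1}}{\asgn{x_k}{y_k}}{\phi_k}$.

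The natural choice is $\phi_k = \phi[\subst{y_1}{x_1}, \ldots, \subst{y_k}{x_k}]$, i.e.\ $\phi$ with the first $k$ of the $y_i$'s replaced by the corresponding $x_i$'s. Plainly $\phi_0 = \phi$, and since the variables $x_1, \ldots, x_n, y_1, \ldots, y_n$ of a renaming are pairwise distinct (Definition~\ref{def:rnm}), the sequential composition of the single-variable substitutions $[\subst{y_k}{x_k}]$ agrees with the simultaneous substitution defining $\I^{-1}$, so $\phi_n = \I^{-1}(\phi)$.

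For each leaf the side condition of (assign) requires $\phi_{k-1} \impl \phi_k[\subst{x_k}{y_k}]$. A direct substitution calculation, again leveraging the pairwise-distinctness of the renaming variables, shows that $\phi_k[\subst{x_k}{y_k}]$ reduces to $\phi_{k-1}$: the outer $[\subst{x_k}{y_k}]$ exactly undoes the occurrences of $x_k$ that were freshly introduced in $\phi_k$ by $[\subst{y_k}{x_k}]$, while leaving untouched all other $x_j$ ($j < k$) and $y_j$ ($j > k$) produced by the earlier substitutions. The side condition then holds as an equality.

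The main obstacle is the substitution bookkeeping — tracking how the simultaneous substitution packaged into $\I^{-1}$ decomposes into the sequence of single-variable substitutions arising from iterated (assign), and verifying the cancellation at each step. The pairwise-distinctness property enforced by Definition~\ref{def:rnm} is what keeps these computations clean; once it has been exploited, the derivation is mechanical. The base case $n = 0$ (empty renaming, identified with $\skp$) is immediate from (skip) with the trivial side condition $\phi \impl \phi$.
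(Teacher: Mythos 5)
Your construction is correct in substance and is in fact the construction the paper uses for the three sibling renaming lemmas (Lemmas~\ref{lem:merge_sup}, \ref{lem:u_post} and \ref{lem:upd_post}): iterate (seq), discharge each leaf with (assign), and check that the composed single-variable substitutions collapse to the simultaneous one. For this particular lemma, however, the paper's own proof is a one-line appeal to the semantic Lemma~\ref{lemma:rnm}; your explicit derivation is more informative and arguably more to the point, since Lemma~\ref{lemma:rnm} speaks of validity ($\models$) rather than of $\Hg$-derivability, which is what has to be established here.

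There is one step you gloss over that deserves to be made explicit, because it is exactly where the statement is fragile. Your claim that $\phi_k[\subst{x_k}{y_k}]$ equals $\phi_{k-1}$ assumes that every occurrence of $x_k$ in $\phi_k$ was freshly introduced by $[\subst{y_k}{x_k}]$. Definition~\ref{def:rnm} only makes the $x_i$ and $y_i$ distinct from one another; it does not prevent $\phi$ itself from containing $x_k$. If $x_k \in \FV{\phi}$ the cancellation fails and, indeed, the lemma itself fails: for $\I = \asgn{x}{y}$ and $\phi \equiv (x=1)$ one has $\I^{-1}(\phi) = \phi$, and the triple $\hoatri{x=1}{\asgn{x}{y}}{x=1}$ is not even valid, hence not derivable. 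So the lemma carries an implicit hypothesis $\assd{\I} \cap \FV{\phi} = \emptyset$, which does hold at its point of use in Proposition~\ref{prop:TsaComplete}, where the assigned variables $x_{\new(\Vcal(x))}$ of $\Ic$ are fresh version variables not occurring in $\tosa{\Vcal}(\phi)$. You should state that hypothesis and invoke it in the cancellation step (the pairwise distinctness of Definition~\ref{def:rnm} alone does not suffice); with it, your argument goes through. The $n=0$ base case is a non-issue.
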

\begin{proof}
Follows directly from Lemma~\ref{lemma:rnm}.
\hfill$\Box$
\end{proof}

\begin{lemma}\label{lem:u_post}
Let $\Vcal \in \Var \to \mathbb{N}^+$, $C \in \Acom$,
$\Vcal' = \Vcal[x \mapsto
\new(\Vcal(x)) \mid x \in \assd{C}]$, $\tsa(\Vcal',C) = (\Vcal'',C')$
and $\Uc = [\asgn{x_{\new(\Vcal(x))}}{x_{\Vcal''(x)}} \mid x \in \assd{C}]$.
\[
\infHg \hoatri{\tosa{\Vcal''}(\theta)}
              {\Uc}
              {\tosa{\Vcal'}(\theta)}
\]
\end{lemma}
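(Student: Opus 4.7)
The plan is to reduce the claim to Lemma~\ref{lem:i_pre_post} applied to the renaming $\Uc$, and then check that the weakest precondition it delivers coincides syntactically with $\tosa{\Vcal'}(\theta)$.

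First, observe that $\Uc$, viewed as a finite bijection, is the function $[x_{\new(\Vcal(x))} \mapsto x_{\Vcal''(x)} \mid x \in \assd{C}]$. Because $\Vcal'(x) = \new(\Vcal(x))$ for every $x \in \assd{C}$ by the definition of $\Vcal'$, the inverse renaming $\Uc^{-1}$ is $[x_{\Vcal''(x)} \mapsto x_{\Vcal'(x)} \mid x \in \assd{C}]$. Lemma~\ref{lem:i_pre_post} then yields
\[
\infHg \hoatri{\tosa{\Vcal''}(\theta)}{\Uc}{\Uc^{-1}(\tosa{\Vcal''}(\theta))}.
\]
It thus suffices to show the syntactic identity $\Uc^{-1}(\tosa{\Vcal''}(\theta)) = \tosa{\Vcal'}(\theta)$.

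To prove the identity, consider an arbitrary free variable $x$ of $\theta$; in $\tosa{\Vcal''}(\theta)$ it occurs as $x_{\Vcal''(x)}$. If $x \in \assd{C}$, then $x_{\Vcal''(x)}$ lies in $\dom(\Uc^{-1})$ and the substitution replaces it by $x_{\Vcal'(x)}$, which is exactly what $\tosa{\Vcal'}$ produces. If $x \notin \assd{C}$, then by Lemma~\ref{lem:lemma4} (applied to $\tsa(\Vcal', C) = (\Vcal'', C')$) we have $\Vcal''(x) = \Vcal'(x)$, so $x_{\Vcal''(x)}$ already coincides with $x_{\Vcal'(x)}$; moreover, this variable is not in $\dom(\Uc^{-1})$ (since any element of that domain has its identifier in $\assd{C}$), and is therefore left untouched. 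In both cases we obtain $x_{\Vcal'(x)}$, so the identity holds.

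The only subtle point is the non-collision check in the second case: one must verify that the variable $x_{\Vcal''(x)} = x_{\Vcal'(x)}$ for $x \notin \assd{C}$ cannot accidentally match some $y_{\Vcal''(y)}$ with $y \in \assd{C}$, and this is immediate from the fact that variables are pairs and the identifier $x$ is preserved. Once this is dispatched, the syntactic identity follows and the derivation is complete.\hfill$\Box$
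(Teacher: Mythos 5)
Your proof is correct, and it reaches the conclusion by a genuinely different (though closely related) route than the paper. The paper derives the triple directly, by successive applications of (assign) and (seq): it computes the weakest precondition of $\tosa{\Vcal'}(\theta)$ under the assignments of $\Uc$ and checks the syntactic identity $(\tosa{\Vcal'}(\theta))[\subst{x_{\new(\Vcal(x))}}{x_{\Vcal''(x)}} \mid x \in \assd{C}] = \tosa{\Vcal''}(\theta)$, so that the side condition of (assign) becomes a trivial implication. You instead invoke the generic renaming fact of Lemma~\ref{lem:i_pre_post} with $\I = \Uc$ and then verify the mirror-image identity $\Uc^{-1}(\tosa{\Vcal''}(\theta)) = \tosa{\Vcal'}(\theta)$. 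Both identities hinge on the same two observations: for $x \in \assd{C}$ the versions $\Vcal'(x) = \new(\Vcal(x))$ and $\Vcal''(x)$ are exchanged by the renaming, and for $x \notin \assd{C}$ one has $\Vcal'(x) = \Vcal''(x)$ by Lemma~\ref{lem:lemma4} --- a step the paper uses only implicitly and you make explicit, which is a small improvement. Your route buys uniformity, since Lemma~\ref{lem:i_pre_post} packages the (assign)/(seq) bookkeeping once and for all; the price is that you must compute $\Uc^{-1}$ and rely on $\Uc$ being a legitimate element of $\rnm$ (all left- and right-hand variables pairwise distinct), a well-formedness fact about the translation that both your proof and the paper's take for granted.
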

\begin{proof}
$\infHg \hoatri{\tosa{\Vcal''}(\theta)}
              {\Uc}
              {\tosa{\Vcal'}(\theta)}$
follows from successively applying the (assign) and (seq) rules, using
as precondition the postcondition with the substitution
$[\subst{x_{\new(\Vcal(x))}}{{x_{\Vcal''(x)}}} ]$ for each assigment $\asgn{x_{\new(\Vcal(x))}}{x_{\Vcal''(x)}}$  of the
renaming sequence, since
\begin{align*}
& (\tosa{\Vcal'}(\theta))
[\subst{x_{\new(\Vcal(x))}}{{x_{\Vcal''(x)}}} \mid x \in \assd{C}]  \\
= \: &
(\tosa{\Vcal[x \mapsto \new(\Vcal(x)) \mid x \in \assd{C}]}(\theta))
[\subst{x_{\new(\Vcal(x))}}{{x_{\Vcal''(x)}}} \mid x \in \assd{C}] \\
= \: &\tosa{\Vcal''}(\theta)
\end{align*}
\hfill$\Box$
\end{proof}

\begin{lemma}\label{lem:upd_post}
Let $\Vcal \in \Var \to \mathbb{N}^+$, $C \in \Acom$, 
$\Vcal' = \Vcal[x \mapsto \new(\Vcal(x)) 
                     \mid x \in \assd{C}]$,
$\tsa(\Vcal',C) = (\Vcal'',C')$,
$\Uc = [\asgn{x_{\new(\Vcal(x))}}
                  {x_{\Vcal''(x)}}
            \mid x \in \assd{C}]$ and 
$\Vcal''' = \Vcal''[x \mapsto \jump(l) 
                         \mid x_l \in
        \dom(\Uc)]$.
\[
\infHg \hoatri{\tosa{\Vcal'}(\psi)}
              {\upd(\dom(\Uc))}
              {\tosa{\Vcal'''}(\psi)}
\]
\end{lemma}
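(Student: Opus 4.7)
The plan is to mirror the strategy used in Lemma~\ref{lem:u_post} and Lemma~\ref{lem:merge_sup}: since $\upd(\dom(\Uc))$ is a sequence of single assignments of the form $\asgn{x_{\jump(l)}}{x_l}$ (one per element $x_l\in\dom(\Uc)$), I will successively apply the (assign) and (seq) rules of $\Hg$, threading intermediate assertions obtained by instantiating the substitution axiom. The only thing that must be checked is that applying the full substitution $[\subst{x_{\jump(l)}}{x_l} \mid x_l \in \dom(\Uc)]$ to $\tosa{\Vcal'''}(\psi)$ yields exactly $\tosa{\Vcal'}(\psi)$, which guarantees that the rule instances compose into the desired triple.

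First I will unfold the defining expressions. Since $\dom(\Uc) = \{x_{\new(\Vcal(x))} \mid x \in \assd{C}\}$, the command $\upd(\dom(\Uc))$ is $[\asgn{x_{\jump(\new(\Vcal(x)))}}{x_{\new(\Vcal(x))}} \mid x \in \assd{C}]$. By construction $\Vcal'''(x) = \jump(\new(\Vcal(x)))$ for $x\in\assd{C}$, while $\Vcal'(x) = \new(\Vcal(x))$. So for $x\in\assd{C}$ the substitution rewrites $x_{\Vcal'''(x)}$ into $x_{\Vcal'(x)}$. For $x\notin\assd{C}$, Lemma~\ref{lem:lemma4} applied to $\tsa(\Vcal',C) = (\Vcal'',C')$ gives $\Vcal''(x) = \Vcal'(x)$, and by definition $\Vcal'''(x) = \Vcal''(x) = \Vcal'(x)$, so these variables are untouched by the substitution and already appear in $\tosa{\Vcal'''}(\psi)$ with their correct $\Vcal'$-version. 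Hence the substitution transforms $\tosa{\Vcal'''}(\psi)$ exactly into $\tosa{\Vcal'}(\psi)$.

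With this identity in hand, the derivation is routine: applying (assign) to each assignment of $\upd(\dom(\Uc))$, starting from the postcondition $\tosa{\Vcal'''}(\psi)$ and working backwards, produces at each step a precondition equal to the previous assertion with one substitution applied; after all assignments have been processed, the resulting precondition is precisely $\tosa{\Vcal'}(\psi)$. The (seq) rule is then used to assemble the individual triples into a derivation for the whole renaming sequence.

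The main obstacle is essentially bookkeeping: verifying the substitution identity $\tosa{\Vcal'''}(\psi)[\subst{x_{\jump(l)}}{x_l} \mid x_l\in\dom(\Uc)] = \tosa{\Vcal'}(\psi)$ requires the interplay of $\jump$, $\new$, and Lemma~\ref{lem:lemma4} to be laid out carefully. Beyond that, because all assignments in $\upd(\dom(\Uc))$ involve pairwise distinct left-hand sides (a consequence of $\upd$ producing an element of $\rnm$), the order in which the substitutions are applied is immaterial, so the inductive assembly of the derivation goes through without further subtleties.
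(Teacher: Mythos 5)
Your proposal is correct and takes essentially the same route as the paper: both derive the triple by successively applying the (assign) and (seq) rules and reduce everything to the substitution identity $\tosa{\Vcal'''}(\psi)[\subst{x_{\jump(l)}}{x_l} \mid x_l\in\dom(\Uc)] = \tosa{\Vcal'}(\psi)$. Your case split on $x\in\assd{C}$ versus $x\notin\assd{C}$ (invoking Lemma~\ref{lem:lemma4} for the latter) merely spells out bookkeeping that the paper's proof leaves implicit.
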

\begin{proof}
Remember that $\upd\ (\dom(\Uc)) 
= [\asgn{x_{\jump(l)}}{x_l} \mid x_l \in \dom(\Uc) ]$.

$\infHg \hoatri{\tosa{\Vcal'}(\psi)}
              {\upd(\dom(\Uc))}
              {\tosa{\Vcal'''}(\psi)}$
follows from successively applying the (assign) and (seq) rules, using
as precondition the postcondition with the substitution
$[\subst{x_{\jump(l)}}{x_l} ]$ for each assigment $\asgn{x_{\jump(l)}}{x_l}$  of the
renaming sequence, since
\begin{align*}
&  (\tosa{\Vcal'''}
  (\psi)[\subst{x_{\jump(l)}}{x_l}
        \mid x_l \in \dom(\Uc)]) \\
= \: & (\tosa{\Vcal''[x \mapsto \jump(l) \mid x_l 
                                     \in \dom(\Uc)]}
  (\psi)[\subst{x_{\jump(l)}}{x_l}
        \mid x_l \in \dom(\Uc)]) \\
= \: &\Vcal'(\psi)
\end{align*}
\hfill$\Box$
\end{proof}

\begin{proposition}\label{prop:TsaComplete}
Let $C \in \Acom$, $\phi,\psi \in \assert$, $\Vcal \in \Var \to
\mathbb{N}^+$ and $\tsa(\Vcal, C) = (\Vcal',C')$. 
If $\infHg \hoatri{\phi}{C}{\psi}$, then 
$\infHg \hoatri{\tosa{\Vcal}(\phi)}{\Tinv(C')}{\tosa{\Vcal'}(\psi)}$.
\end{proposition}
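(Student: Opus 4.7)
The proof is by induction on the structure of $C \in \Acom$; since $\Hg$ is syntax-directed, this coincides with induction on the derivation of $\infHg \hoatri{\phi}{C}{\psi}$, and each case builds the target derivation from the corresponding $\Hg$-rule, the induction hypothesis on immediate subprograms, and the renaming lemmas already established.

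For the \emph{skip} and \emph{assign} cases the side conditions of the $\Hg$-rules ($\phi \impl \psi$, respectively $\phi \impl \psi[\subst{x}{e}]$) carry over under $\tosa{\Vcal}$: the map $\tosa{\Vcal}$ acts as a bijective renaming of variables and therefore preserves validity of implications, and for assignment one verifies the substitution identity $\tosa{\Vcal[x \mapsto \nextt(\Vcal(x))]}(\psi)[\subst{x_{\nextt(\Vcal(x))}}{\tosa{\Vcal}(e)}] = \tosa{\Vcal}(\psi[\subst{x}{e}])$, which is immediate from the definitions. For \emph{sequence}, the translation threads an intermediate version function through the two components, and a single application of (seq) to the two instances of the induction hypothesis closes the case. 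For the \emph{conditional}, applying the induction hypothesis to each branch produces $\infHg \hoatri{\tosa{\Vcal}(\phi \andd b)}{\Tinv(C_t')}{\tosa{\Vcal'}(\psi)}$ and $\infHg \hoatri{\tosa{\Vcal}(\phi \andd \negg b)}{\Tinv(C_f')}{\tosa{\Vcal''}(\psi)}$; Lemma~\ref{lem:merge_sup} then carries each branch postcondition through its trailing renaming to the common assertion $\tosa{\supp(\Vcal',\Vcal'')}(\psi)$, so that composing with (seq) and applying the (if) rule yields the translated triple.

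The \emph{while} case is the core of the proof. The translated command decomposes as $\Ic \sep \whileinv{\tosa{\Vcal'}(b)}{\tosa{\Vcal'}(\theta)}{\Tinv(C') \sep \Uc} \sep \upd(\dom(\Uc))$, and the plan is to derive it as three fragments chained by two applications of (seq). The first fragment, $\hoatri{\tosa{\Vcal}(\phi)}{\Ic}{\tosa{\Vcal'}(\phi)}$, follows from Lemma~\ref{lem:i_pre_post} together with the identity $\Ic^{-1}(\tosa{\Vcal}(\phi)) = \tosa{\Vcal'}(\phi)$, which holds because $\Vcal$ and $\Vcal'$ differ only on $\assd{C}$ and there $\Ic^{-1}$ precisely renames $\Vcal$-versions into $\Vcal'$-versions. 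The second fragment is the loop triple $\hoatri{\tosa{\Vcal'}(\phi)}{\whileinv{\tosa{\Vcal'}(b)}{\tosa{\Vcal'}(\theta)}{\Tinv(C') \sep \Uc}}{\tosa{\Vcal'}(\psi)}$, obtained by a single application of the $\Hg$ (while) rule: its two side conditions $\tosa{\Vcal'}(\phi) \impl \tosa{\Vcal'}(\theta)$ and $\tosa{\Vcal'}(\theta) \andd \negg \tosa{\Vcal'}(b) \impl \tosa{\Vcal'}(\psi)$ are the $\tosa{\Vcal'}$-renamings of the source side conditions $\phi \impl \theta$ and $\theta \andd \negg b \impl \psi$, while its body premise $\hoatri{\tosa{\Vcal'}(\theta) \andd \tosa{\Vcal'}(b)}{\Tinv(C') \sep \Uc}{\tosa{\Vcal'}(\theta)}$ is built by feeding the subderivation $\infHg \hoatri{\theta \andd b}{C}{\theta}$ to the induction hypothesis with starting version $\Vcal'$ (yielding postcondition $\tosa{\Vcal''}(\theta)$) and appending $\Uc$ via Lemma~\ref{lem:u_post}. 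The third fragment, $\hoatri{\tosa{\Vcal'}(\psi)}{\upd(\dom(\Uc))}{\tosa{\Vcal'''}(\psi)}$, is furnished directly by Lemma~\ref{lem:upd_post}. Chaining the three fragments with (seq) produces the required $\Hg$-derivation of $\hoatri{\tosa{\Vcal}(\phi)}{\Tinv(C')}{\tosa{\Vcal'''}(\psi)}$.

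The main obstacle throughout is bookkeeping: one must keep straight the four version functions $\Vcal$, $\Vcal'$, $\Vcal''$, $\Vcal'''$ produced in the while case (and their analogues in the conditional) and verify, at each step, the substitution identities (such as $\Ic^{-1} \circ \tosa{\Vcal} = \tosa{\Vcal'}$ on formulas of $\assert$) that make the renamings fit together on the nose. Lemma~\ref{lemma:Hg-conseq} is available throughout as a safety net if any intermediate condition needs to be weakened or strengthened to line the fragments up.
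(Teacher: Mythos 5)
Your proposal is correct and follows essentially the same route as the paper's proof: structural induction with the same use of Lemma~\ref{lem:merge_sup} in the conditional case and the same three-fragment decomposition (via Lemmas~\ref{lem:i_pre_post}, \ref{lem:u_post} and \ref{lem:upd_post}) chained by (seq) in the while case. The only cosmetic difference is that you name the intermediate assertion after $\Ic$ as $\tosa{\Vcal'}(\phi)$ via the identity $\Ic^{-1}(\tosa{\Vcal}(\phi)) = \tosa{\Vcal'}(\phi)$, whereas the paper keeps it as $\Ic^{-1}(\tosa{\Vcal}(\phi))$ and uses that same identity implicitly when checking the (while) side condition.
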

\begin{proof}
By induction on the structure of $\infHg \hoatri{\phi}{C}{\psi}$. 
\begin{itemize}
\item Assume the last step is:
$$
\begin{prooftree}
  \justifies
  \hoatri{\phi}{\skp}{\psi} \using \quad \mbox{with $\phi \to \psi$}
\end{prooftree}
$$
We have $\tsa(\Vcal, \skp) = (\Vcal,\skp)$.
As $\models \phi \to \psi$ we have
$\models \tosa{\Vcal}(\phi) \to \tosa{\Vcal}(\psi)$.
So $\infHg \hoatri{\tosa{\Vcal}(\phi)}
                                                {\skp}
                                                {\tosa{\Vcal}(\psi)}$
by applying the (skip) rule.

\item Assume the last step is:
$$
\begin{prooftree}
  \justifies
  \hoatri{\phi}{\asgn{x}{e}}{\psi} \using \quad 
    \mbox{with $\phi \to \psi[\subst{x}{e}]$}
\end{prooftree}
$$
We have $\tsa(\Vcal, \asgn{x}{e}) 
  = (\Vcal[x \mapsto \nextt(\Vcal(x))], 
     \asgn{x_{\nextt(\Vcal(x))}}{\tosa{\Vcal}(e)}
    )
$.
Since $\models \phi \to \psi[\subst{x}{e}]$, it follows that
$\models \tosa{\Vcal}(\phi \to \psi[\subst{x}{e}]) $.
Moreover, 
\begin{align*}
&\tosa{\Vcal}(\phi) \to 
              \tosa{\Vcal[x \mapsto \nextt(\Vcal(x))]}(\psi)
                  [\subst{x_{\nextt(\Vcal(x))}}{\tosa{\Vcal}(e)}]\\
= \: &\tosa{\Vcal}(\phi) \to 
              \tosa{\Vcal}(\psi[\subst{x}{x_{\nextt(\Vcal(x))}}])
                  [\subst{x_{\nextt(\Vcal(x))}}{\tosa{\Vcal}(e)}]\\
= \: &\tosa{\Vcal}(\phi) \to 
              \tosa{\Vcal}(\psi[\subst{x}{x_{\nextt(\Vcal(x))}}]
              [\subst{x_{\nextt(\Vcal(x))}}{e}]) 
              \quad \mbox{, because }x_{\nextt(\Vcal(x))} \notin \FV{\psi}\\
= \: &\tosa{\Vcal}(\phi) \to 
              \tosa{\Vcal}(\psi[\subst{x}{e}]) \\
= \: &\tosa{\Vcal}(\phi \to \psi[\subst{x}{e}])
\end{align*}
Hence
$\infHg \hoatri{\tosa{\Vcal}(\phi)}
         {\asgn{x_{\nextt(\Vcal(x))}}{\tosa{\Vcal}(e)}}
         {\tosa{\Vcal[x \mapsto \nextt(\Vcal(x))]}(\psi)}$ 
follows by the (assign) rule.

\item Assume the last step is:
$$
\begin{prooftree}
  \hoatri{\phi}{C_1}{\theta} \quad \hoatri{\theta}{C_2}{\psi}
  \justifies
  \hoatri{\phi}{C_1 \sep C_2}{\psi}
\end{prooftree}
$$
\begin{align*}
\mbox{We have } \tsa(\Vcal, C_1 \sep C_2) =& (\Vcal'',C_{1}'; C_{2}')\\
\mbox{with }&\tsa(\Vcal, C_1) = (\Vcal',C_{1}') \mbox{ and }
     \tsa(\Vcal', C_2) = (\Vcal'',C_{2}').
\end{align*}
By induction hypothesis, we have
$\infHg
\hoatri{\tosa{\Vcal}(\phi)}{\Tinv(C_1')}{\tosa{\Vcal'}(\theta)}$ and
also  $\infHg \hoatri{\tosa{\Vcal'}(\theta)}{\Tinv(C_2')}{\tosa{\Vcal''}(\psi)}$.
Hence, applying the (seq) rule, we get 
\[\infHg \hoatri{\tosa{\Vcal}(\phi)}
                                               {\Tinv(C_1' \sep C_2')}
                                               {\tosa{\Vcal''}(\psi)}
\]

\item Assume the last step is:
$$
\begin{prooftree}
  \hoatri{\phi \andd \boolemb{b}}{C_t}{\psi} \quad 
    \hoatri{\phi \andd \negg \boolemb{b}}{C_f}{\psi}
  \justifies
  \hoatri{\phi}{\ifte{b}{C_t}{C_f}}{\psi}
\end{prooftree}
$$
\begin{align*}
\mbox{We have \ \ } \tsa(\Vcal, C)  = (\supp(\Vcal',\Vcal''), \mathbf{if}\;
          {\tosa{\Vcal}(b)} \; & \mathbf{then}\; {\block{C_t';\mergee(\Vcal',\Vcal'')}} \\
 &  \mathbf{else}\: {\block{C_f';\mergee(\Vcal'',\Vcal')}} \; )\\
       \mbox{with }  \tsa(\Vcal,C_t) = (\Vcal',C_t')   
        & \mbox{ and }   \tsa(\Vcal,C_f) = (\Vcal'',C_f')
\end{align*}
\begin{itemize}
\item By induction hypothesis we have that 
$\infHg \hoatri{\tosa{\Vcal}(\phi \andd
  \boolemb{b})}{\Tinv(C_t')}{\tosa{\Vcal'}(\psi)}$. 
By Lemma~\ref{lem:merge_sup} we have  
$\infHg \hoatri{\tosa{\Vcal'}(\psi)}
               {\mergee(\Vcal',\Vcal'')}
               {\tosa{\supp(\Vcal',\Vcal'')}(\psi)}$. So, applying
               rule (seq), we get
\begin{align}
\infHg \hoatri{\tosa{\Vcal}(\phi \andd \boolemb{b})}
              {\Tinv(C_t')\sep \mergee(\Vcal',\Vcal'')}
              {\tosa{\supp(\Vcal',\Vcal'')}(\psi)}\label{eq:sa_trans2_if3}
\end{align}

\item  By induction hypothesis we have that 
$\infHg \hoatri{\tosa{\Vcal}(\phi \andd
  \neg \boolemb{b})}{\Tinv(C_f')}{\tosa{\Vcal'}(\psi)}$. 
By Lemma~\ref{lem:merge_sup} we have  
$\infHg \hoatri{\tosa{\Vcal'}(\psi)}
               {\mergee(\Vcal'',\Vcal')}
               {\tosa{\supp(\Vcal',\Vcal'')}(\psi)}$. So, applying
               rule (seq), we get
\begin{align}
\infHg \hoatri{\tosa{\Vcal}(\phi \andd \negg \boolemb{b})}
              {\Tinv(C_f')\sep \mergee(\Vcal'',\Vcal')}
              {\tosa{\supp(\Vcal',\Vcal'')}(\psi)}\label{eq:sa_trans2_if4}
\end{align}
\end{itemize}
Finally, from (\ref{eq:sa_trans2_if3}) and (\ref{eq:sa_trans2_if4}), by
rule (if) we get 
\begin{align*}
\infHg \hoatri{\tosa{\Vcal}(\phi)}
               { \mathbf{if}\;
          {\tosa{\Vcal}(b)} \; & \mathbf{then}\; {\block{\Tinv(C_t');\mergee(\Vcal',\Vcal'')}} \\
 &  \mathbf{else}\: {\block{\Tinv(C_f');\mergee(\Vcal'',\Vcal')}} }
                {\tosa{\supp(\Vcal',\Vcal'')}(\psi))}
\end{align*}
 
\item Assume the last step is:
$$
\begin{prooftree}
  \hoatri{\theta \andd \boolemb{b}}{C}{\theta} 
  \justifies
  \hoatri{\phi}{\whileinv{b}{\theta}{C}}{\psi} \using 
    \mbox{with $\phi \impl \theta\ \mbox{ and } 
              \theta \andd \negg b \impl \psi $}
\end{prooftree}
$$
We have
\begin{align}
\tsa(\Vcal, \whileinv{b}{\theta}{C}) 
    &= (\Vcal''', \forinv{\Ic}
                        {\tosa{\Vcal'}(b)}
                        {\U}
                        {\tosa{\Vcal'}(\theta)}
                        {C'}; \upd(\dom(\Uc))
      )\nonumber\\
    \mbox{with } & \Ic = [\asgn{x_{\new(\Vcal(x))}}
                               {x_{\Vcal(x)}}
                          \mid x \in \assd{C}] \nonumber\\
          &\Vcal' = \Vcal[x \mapsto \new(\Vcal(x)) 
                         \mid x \in \assd{C}] \nonumber\\
          &(\Vcal'',C') = \tsa(\Vcal',C) \nonumber\\
          &\U = [\asgn{x_{\new(\Vcal(x))}}
                      {x_{\Vcal''(x)}}
                \mid x \in \assd{C}] \nonumber\\
          &\Vcal''' = \Vcal''[x \mapsto \jump(l) 
                             \mid x_l \in
            \dom(\U)]\nonumber
\end{align}
\begin{align}
\mbox{and  \ \ } & \Tinv(\forinv{\Ic}
                        {\tosa{\Vcal'}(b)}
                        {\U}
                        {\tosa{\Vcal'}(\theta)}
                        {C'} \sep \upd(\dom(\U)))  \qquad\qquad\:  \nonumber\\
& = \I \sep \whileinv{\tosa{\Vcal'}(b)}
                   {\tosa{\Vcal'}(\theta)}
                   {\block{\Tinv(C') \sep \U}}
     \sep \upd(\dom(\U))\nonumber
\end{align}

We must prove that\\
$\infHg \hoatri{\tosa{\Vcal}(\phi)}
               {\Ic \sep \whileinv{\tosa{\Vcal'}(b)}
                                 {\tosa{\Vcal'}(\theta)}
                                 {\block{\Tinv(C') \sep \U}}
                    \sep \upd(\dom(\U))}
                {\tosa{\Vcal'''}(\psi)} $
This follows from applying rule (seq) twice, to the following premisses:
\begin{align}
  & \infHg \hoatri{\tosa{\Vcal}(\phi)}
                {\Ic}
                {\Ic^{-1}(\tosa{\Vcal}(\phi))} \label{eq:d1} \\
& \infHg \hoatri{\Ic^{-1}(\tosa{\Vcal}(\phi))}
                  {\whileinv{\tosa{\Vcal'}(b)}
                            {\tosa{\Vcal'}(\theta)}
                            {\block{\Tinv(C') \sep \U}}}
                {\tosa{\Vcal'}(\psi)}  \label{eq:d2} \\
&\infHg \hoatri{\tosa{\Vcal'}(\psi)}
                  {\upd(\dom(\U))}
                  {\tosa{\Vcal'''}(\psi)}  \label{eq:d3} 
\end{align}
We have that (\ref{eq:d1}) follows from Lemma~\ref{lem:i_pre_post},
and  (\ref{eq:d3}) follows from Lemma~\ref{lem:upd_post}.
We will now prove  (\ref{eq:d2}). 
By induction hypotesis, we have
$\infHg \hoatri{\tosa{\Vcal'}(\theta) \wedge \tosa{\Vcal'}(b)}
                    {\Tinv(C')}
                    {\tosa{\Vcal''}(\theta)}$. 
Moreover, by Lemma~\ref{lem:u_post},  $\infHg \hoatri{\tosa{\Vcal''}(\theta)}
                      {\U}
                      {\tosa{\Vcal'}(\theta)}$. Thus, by rule (seq),
$
\infHg   \hoatri{\tosa{\Vcal'}(\theta) \wedge \tosa{\Vcal'}(b)}
                    {\Tinv(C') \sep \U}
                    {\tosa{\Vcal'}(\theta)}   
$.

Since 
      $(\I^{-1}(\tosa{\Vcal}(\phi)) \to \tosa{\Vcal'}(\theta))$ and
      $(\tosa{\Vcal'}(\theta) \wedge \neg \tosa{\Vcal'}(b) \to
      \tosa{\Vcal'}(\psi))$ both hold, we can now apply rule (while), and obtain
\[
\infHg \hoatri{\I^{-1}(\tosa{\Vcal}(\phi))}
                  {\whileinv{\tosa{\Vcal'}(b)}
                            {\tosa{\Vcal'}(\theta)}
                            {\block{\Tinv(C') \sep \U}}}
                {\tosa{\Vcal'}(\psi)} 
\]
\end{itemize}
\hfill$\Box$
\end{proof}


It is now immediate that $\tsa$ conforms th Definition~\ref{def:Htriples-transl-loops}.

\begin{proposition}\label{prop:tsaSAtranlation}
  The $\tsa$ function of Figure~\ref{fig:sa-translation}  is an SA translation.
\end{proposition}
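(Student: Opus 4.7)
The plan is to discharge the three obligations of Definition~\ref{def:Htriples-transl-loops} one at a time, using the machinery already developed. Fix a Hoare triple $(\phi, C, \psi)$ and any initial version function $\Vcal$, and let $(\Vcal', C') = \tsa(\Vcal, C)$, so that at the triple level the translation gives $\tsa(\phi, C, \psi) = (\tosa{\Vcal}(\phi), C', \tosa{\Vcal'}(\psi))$ by definition.

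For the soundness condition (1), that validity of the translated triple entails validity of the original, I would simply invoke Proposition~\ref{prop:TsaSound}: applied to this particular $\Vcal$ and the pair $(\Vcal', C') = \tsa(\Vcal, C)$, it states exactly that $\models \hoatri{\tosa{\Vcal}(\phi)}{\Aerase{\Tinv(C')}}{\tosa{\Vcal'}(\psi)}$ implies $\models \hoatri{\phi}{\Aerase{C}}{\psi}$. The derivability-preservation condition (2) is Proposition~\ref{prop:TsaComplete} applied verbatim: $\infHg \hoatri{\phi}{C}{\psi}$ implies $\infHg \hoatri{\tosa{\Vcal}(\phi)}{\Tinv(C')}{\tosa{\Vcal'}(\psi)}$.

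The only remaining obligation is the syntactic side condition $\SA{\tosa{\Vcal}(\phi)}{C'}$, i.e.\ $\assd{C'} \cap \FV{\tosa{\Vcal}(\phi)} = \emptyset$. Since $\FV{\tosa{\Vcal}(\phi)} = \{\, x_{\Vcal(x)} \mid x \in \FV{\phi}\,\}$, it suffices to show, by induction on $C$, that no variable of the form $x_{\Vcal(x)}$ ever appears on the left-hand side of an assignment in $C'$. In the assignment case the new left-hand side is $x_{\nextt(\Vcal(x))}$, whose head differs from that of $\Vcal(x)$. In the sequence and conditional cases the induction hypothesis handles the recursive translations, while the merge renamings write to $x_{\Vcal'(x)}$ or $x_{\Vcal''(x)}$ for $x$ with $\Vcal(x) \prec \Vcal'(x)$ (resp.\ $\Vcal''(x)$), so again the head is strictly larger. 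In the while case the initialization $\Ic$ writes $x_{\new(\Vcal(x))}$ whose version list is strictly longer than $\Vcal(x)$, the recursive translation starts from $\Vcal[x \mapsto \new(\Vcal(x))]$ and so inductively produces versions with at least that length, $\Uc$ likewise writes $x_{\new(\Vcal(x))}$, and $\upd$ writes $x_{\jump(\new(\Vcal(x)))}$ whose head is $\Vcal(x)$'s head incremented by one.

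The only real work is thus the auxiliary induction establishing the $\SA$ condition; this amounts to showing that none of $\nextt$, $\new$, or the $\jump \circ \new$ composition can return its input version, and that the recursive calls preserve this invariant. Everything else is an immediate citation of Propositions~\ref{prop:TsaSound} and~\ref{prop:TsaComplete}, which package the substantive semantic and deductive content of the translation.
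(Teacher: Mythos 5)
Your treatment of the two semantic obligations is exactly the paper's: the proof given there consists precisely of citing Proposition~\ref{prop:TsaSound} for condition (1) and Proposition~\ref{prop:TsaComplete} for condition (2), so that part of your proposal matches the intended argument verbatim. Where you genuinely diverge is in also discharging the syntactic side condition $\SA{\tosa{\Vcal}(\phi)}{C'}$: the paper's own proof silently omits this obligation (it is only remarked upon informally in the example of Section~\ref{sec:example}), so your proposal is actually the more complete of the two. Your sketch of that extra step is on the right track, but as stated the induction hypothesis is too weak for the sequence case: knowing that $C_2'$ assigns no variable of the form $x_{\Vcal'(x)}$ does not immediately give that it assigns no $x_{\Vcal(x)}$, since $\Vcal'$ may differ from $\Vcal$ on $\assd{C_1}$. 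You gesture at the fix (``the recursive calls preserve this invariant''), but to make it precise you need a strict order on versions under which $\nextt$, $\new$, $\jump\circ\new$ and the whole of $\tsa$ are monotone --- e.g.\ reverse-lexicographic order on version lists, with a proper left-extension counted as strictly larger --- and then prove by induction that every version assigned in $C'$ is strictly above $\Vcal(x)$ and that $\Vcal(x)\preceq\Vcal'(x)$ for the output function. With that strengthening your argument goes through; without it the sequence (and conditional-merge) cases do not close.
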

\begin{proof}
 Let $C\in \Acom$, $\phi,\psi\in\assert$, $\Vcal \in \Var \to
  \mathbb{N}^+$, $\tsa(\Vcal,C) = (\Vcal',C')$ and $\tsa(\phi,C,\psi)
  = (\tosa{\Vcal}(\phi),C',\tosa{\Vcal'}(\psi))$. 
We want to prove that
\begin{enumerate}
\item If $\:\models \hoatri{\tosa{\Vcal}(\phi)}{\Aerase{\Tinv(C')}}{\tosa{\Vcal'}(\psi)} $,
then $\:\models \hoatri{\phi}{\Aerase{C}}{\psi}$.
\item  If  $\: \infHg \hoatri{\phi}{C}{\psi}$, then
 $\: \infHg \hoatri{\tosa{\Vcal}(\phi)}{\Tinv(C')}{\tosa{\Vcal'}(\psi)}$.
\end{enumerate}
1. follows directly from Proposition~\ref{prop:TsaSound}.\\
2. follows directly from  Proposition~\ref{prop:TsaComplete}.
\hfill$\Box$
\end{proof}

\section{Conclusion}\label{sec:conclusion}

We have proposed a translation of programs annotated with loop invariants into a dynamic single-assignment form. The translation extends to Hoare triples and is adequate for program verification using the efficient VCGen presented in~\cite{PintoJS:sinapv}. Together with that VCGen and the corresponding program logic for single-assignment programs, the translation is part of a workflow for the deductive verification of imperative programs in a way that is efficient and allows for adaptation, since the logic is adaptation-complete. We remark that the workflow, also described in that paper, does not depend on this specific translation, but instead defines semantic requirements that a translation should comply to. The core result of the present report, proved in detail here, is precisely that the specific translation introduced here conforms to those requirements.

\bibliographystyle{plain}
\bibliography{}

\end{document}